\newtheorem{observation}{Observation}
\newcommand{\OPT}{\mathrm{OPT}}
\newcommand{\UG}{\mathcal{U}^\Gamma}
\newcommand{\op}{\overline{p}}
\newcommand{\hp}{\hat{p}}
\renewcommand{\P}{{\cal P}}
\newcommand{\NP}{{\cal NP}}
\newcommand{\Q}{\mathbb{Q}}
\newcommand{\barp}{\overline p}
\newcommand{\hatp}{\hat p}
\newcommand{\T}{\mathcal{P}}
\newcommand{\rT}{\overline{\mathcal{P}}}
\newcommand{\m}{\overline{m}}
\renewcommand{\o}{o}
\newcommand{\ro}{\overline{o}}
\title{Approximation results for makespan minimization with budgeted uncertainty \thanks{This work was partially supported by DFG Project, “Robuste Online-Algorithmen für Scheduling-und Packungsprobleme”, JA 612/19-1, and ANR project ROBUST (ANR-16-CE40-0018).}}
\author{Marin Bougeret$^2$, Klaus Jansen$^1$, Michael Poss$^2$, and Lars Rohwedder$^1$}
 \date{
     \today
 }
\institute{%
     $^1$Department of Computer Science, Kiel University, 24098 Kiel, Germany\\%
  $^2$LIRMM, University of Montpellier, CNRS, France\\[2ex]
  \{marin.bougeret,michael.poss\}@lirmm.fr \\
  \{kj,lro\}@informatik.uni-kiel.de}
\begin{document}

 \maketitle

  \begin{abstract}
   We study approximation algorithms for the problem of minimizing the makespan on a set of machines with uncertainty on the processing times of jobs.
   In the model we consider, which goes back to~\cite{BertsimasS03}, once the schedule is defined an adversary can pick a scenario where deviation
   is added to some of the jobs' processing times.
   Given only the maximal cardinality of these jobs, and the magnitude of potential deviation for each job, the goal is to optimize the worst-case scenario.
   We consider both the cases of identical and unrelated machines. Our main result is an EPTAS for the case of identical machines.
   We also provide a $3$-approximation algorithm and an inapproximability ratio of $2-\epsilon$ for the case of unrelated machines.
   
   \textbf{Keywords:} Makespan minimization, robust optimization, approximation algorithms, EPTAS, parallel machines, unrelated machines
\end{abstract}

   \section{Introduction}

   Classical optimization models suppose perfect information over all parameters. This can lead to optimal solutions having poor performance when the actual parameters deviate, even by a small amount, from the predictions used in the optimization model. Different frameworks have been proposed to overcome this issue, among which Robust Optimization which tackles the uncertainty by providing a set of possible values for these parameters, and considering the worst outcome over that set. In this paper, we consider the problem of scheduling a set of jobs $J$ on the set of machines $M$, so as to minimize the makespan, and considering that the processing times are uncertain. What is more, we consider the budgeted uncertainty model introduced by~\cite{BertsimasS03} where each processing time varies between its nominal value and the latter plus some deviation. Further, in any scenario, at most $\Gamma$ of the uncertain parameters take the higher values, the other being at their nominal values.

Let us now formally define the Robust Scheduling on Unrelated Machines ($R|\UG|C_{\max}$) problem.
For any job $j \in J$ and machine $i \in M$, we denote by $\overline p_{ij} \ge 0$ the nominal processing time of $j$ on $i$, and by $\hat p_{ij} \ge 0$ the (potential) deviation of $j$ on $i$. 
A schedule $\sigma$ is a function from $J\rightarrow M$. We write $\sigma_i$ for the subset of jobs scheduled on machine $i$.
Let $\UG=\{\xi \in \{0,1\}^{|J|}:\|\xi\|_1\leq \Gamma\}$ be the set of all possible scenarios where at most $\Gamma$ jobs deviate.
For any $\xi \in \UG$, we set $p^{\xi}_{ij}=\overline p_{ij} + \xi_{j} \hat p_{ij}$ to be the actual processing time of $j$ on $i$ in scenario $\xi$.

Let us now formalize some common terms, but with dependence on scenario $\xi$. The load of machine $i$ in scenario $\xi$ is calculated as $\sum_{j\in\sigma_i} p^{\xi}_{ij}$.
The makespan in scenario $\xi$ is the maximum load in scenario $\xi$, i.e., $C^{\xi}_{\max}(\sigma) = \max_{i\in M} \sum_{j\in\sigma_i} p^{\xi}_{ij}$.
Finally, $C_{\max}^{\Gamma}(\sigma) = \max_{\xi\in \UG} C_{\max}^{\xi}(\sigma)$ denotes the objective function we consider in Robust Scheduling, where the adversary takes the worst scenario among $\UG$.


Next, we will state important observations about the objective function.
We first need to introduce the following notations. Given a set of jobs $X_i$ scheduled on machine $i$, we define $\barp(X_i)= \sum_{j\in X_i}\barp_{ij}$,
$\hatp(X_i)= \sum_{j\in X_i}\hatp_{ij}$, $\Gamma(X_i)$ as the set of the $\Gamma$ jobs of $X_i$ with the largest $\hatp_{ij}$ values (or $\Gamma(X_i)=\sigma_i$ when $|X_i| < \Gamma$) with ties broken arbitrarily.
Finally, set $\hatp_{\Gamma}(X_i) = \hatp(\Gamma(X_i))$.

By definition we have
$C_{\Gamma}(\sigma)=   \max_{\xi\in \UG} \max_{i\in M} \sum_{j\in\sigma_i} C_{\xi}(\sigma)$, and thus we can rewrite 
$C_{\Gamma}(\sigma)=   \max_{i\in M} \max_{\xi\in \UG} \sum_{j\in\sigma_i} C_{\xi}(\sigma) =  \max_{i\in M} C_{\Gamma}(\sigma_i)$, where
$C_{\Gamma}(\sigma_i) = \max_{\xi\in \UG} \sum_{j\in\sigma_i} C_{\xi}(\sigma)$ is the worst-case makespan on machine $i$.
The benefit of rewriting $C_{\Gamma}(\sigma)$ in this form is that it is now clear that $C_{\Gamma}(\sigma_i) = \barp(\sigma_i)+\hatp_{\Gamma}(\sigma_i)$
as the worst scenario $\xi$ for a fixed $\sigma_i$ is obtained by picking the $\Gamma$ jobs with highest $\hat p_{ij}$ and make them deviate.
Thus, $R|\UG|C_{\max}$ can also be thought as a ``classical'' scheduling problem (without adversary) where the makespan on a machine  $C_{\Gamma}(\sigma_i)$
is simply the sum of all the nominal processing time of jobs of $\sigma_i$, plus only the $\Gamma$ largest deviating values of jobs of $\sigma_i$.
We are now ready to define $R|\UG|C_{\max}$.

\begin{problem}{\sc Robust Scheduling on Unrelated Machines ($R|\UG|C_{\max}$)}
\label{pb:rugcmax}
  \begin{itemize}
  \item Input: $(J, M, \op\in\Q^{|M|\times |J|}_+, \hp\in\Q^{|M|\times |J|}_+)$ where $J$ is the set of jobs, $M$ the set of machines,
    $\op$ are the vectors of nominal processing times, and $\hp$ the vectors of deviation
  \item Output: find a schedule $\sigma:J\rightarrow M$
  \item Objective function: min $C_{\Gamma}(\sigma)=   \max_{\xi\in \UG} \max_{i\in M} \sum_{j\in\sigma_i}[\overline p_{ij} + \xi_{j} \hat p_{ij}] =  \max_{i\in M} C_{\Gamma}(\sigma_i)$,
    where $C_{\Gamma}(\sigma_i) = \barp(\sigma_i)+\hatp_{\Gamma}(\sigma_i)$.
  \end{itemize}
\end{problem}


Following the classical three field notation, we denote by $R|\UG|C_{\max}$ the previous problem.
Notice that when all $\hp_{ij}=0$ the problem corresponds to the classical $R||C_{\max}$, for which we denote by $C(\sigma_i)=\sum_{j \in \sigma_i}\barp_{ij}$ the makespan on machine $i$. 
We are also interested in simplifications of the above problem. The first one is {\sc Robust Scheduling on Uniform Machines ($Q|\UG|C_{\max}$)} where
each machine has a speed $s_i$, each job has two processing times ($\barp_j$ and $\hp_j$), and we have $\barp_{ij}=\frac{\barp_j}{s_i}$
and $\hatp_{ij}=\frac{\hatp_j}{s_i}$.
The other simplification is {\sc Robust Scheduling on Identical Machines ($P|\UG|C_{\max}$)} where $s_i = 1$ for any machine $i$.

Robust scheduling has been considered in the past, mostly for finite uncertainty sets without particular structure, see for instance~\cite{Aloulou2008338,DanielsKouvelis1995,KasperskiKZ12,KasperskiKZ12b,Mastrolili08}. More recently, \cite{BougeretPP18,TadayonS15,binpacking19} considered robust packing and scheduling with the budgeted uncertainty model $\UG$ from~\cite{BertsimasS04}. Specifically,~\cite{BougeretPP18} (with authors in common) provided a 3-approximation algorithm and a $(1+\epsilon)$-approximation (PTAS) for $P|\UG|C_{\max}$ but only for a constant $\Gamma$, as well as a randomized approximation algorithm for $R|\UG|C_{\max}$ having an average ratio of $O(\log(m))$. They also considered problem $1|\UG|\sum_jw_jC_{\max}$, proving that the problem is $\NP$-hard in the strong sense, and providing a polynomial-time algorithm when $w_j=1$ for $j\in J$. Authors of~\cite{TadayonS15} considered the robust one-machine problem for four commonly-used objective criteria: (weighted) total completion time, maximum lateness/tardiness, and number of late jobs. They showed that some of these problems are polynomially solvable and provide mixed-integer programming formulations for others. Their results considered $\UG$ as well as two closely related uncertainty sets. Paper~\cite{binpacking19} (with also authors in common) considers robust bin-packing problem for $\UG$ and one of the uncertainty sets considered by~\cite{TadayonS15}, and provided constant-factor approximations algorithms for the two problems.

In this paper we improve the results of \cite{BougeretPP18} for $P|\UG|C_{\max}$ and $R|\UG|C_{\max}$.
In Section~\ref{sec:3approx} we show that any $c$-approximation for the classical
$R||C_{\max}$ problem leads to a $(c + 1)$-approximation for $R|\UG|C_{\max}$, hence obtaining a $3$-approximation algorithm for the latter problem, and
a  $(2 + \epsilon)$-approximation for $Q|\UG|C_{\max}$. We point out that this result improves the \textit{ad-hoc} $3$-approximation of~\cite{BougeretPP18} for $P|\UG|C_{\max}$,
while having a simpler proof.
In Section~\ref{sec:inapprox}, we show through a reduction from the {\sc Restricted Assignment Problem} that there exists no $(2-\epsilon)$-approximation algorithm for $R|\UG|C_{\max}$ unless $\P=\NP$.
In Section~\ref{sec:PTAS}  we consider the $P|\UG|C_{\max}$ problem and present the first step our main result, namely a PTAS which is valid even when $\Gamma$ is part of the input, i.e., not constant.
Having $\Gamma$ in the input (and not constant) requires a totally different technique from the one used in \cite{BougeretPP18}.
The algorithm is turned into an EPTAS in Section~\ref{sec:EPTAS}, i.e., a PTAS where the dependency
of $\epsilon$ is not in the exponent of the encoding length.






  \section{A 3-approximation for unrelated machines}
  \label{sec:3approx}
  
    \begin{theorem}
      \label{th-main}
      Any polynomial $c$-approximation for $R||C_{\max}$ implies a polynomial $(c + 1)$-approximation for $R|\UG|C_{\max}$.
    \end{theorem}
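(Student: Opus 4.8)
The plan is to reduce a $R|\UG|C_{\max}$ instance to a plain $R||C_{\max}$ instance by forgetting the deviations, solve the latter with the assumed $c$-approximation, and then account for the deviation cost separately. The key observation driving this is the decomposition $C_{\Gamma}(\sigma_i) = \barp(\sigma_i) + \hatp_{\Gamma}(\sigma_i)$ established in the introduction: the robust makespan on each machine splits cleanly into a ``nominal'' part and a ``deviation'' part that sums only the $\Gamma$ largest deviations assigned to that machine. So first I would build the auxiliary instance $\mathcal{I}'$ of $R||C_{\max}$ with the same jobs and machines but with processing times equal to the nominal values $\barp_{ij}$, run the $c$-approximation on $\mathcal{I}'$, and let $\sigma$ be the returned schedule.

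Next I would bound the two parts of $C_{\Gamma}(\sigma) = \max_i \big(\barp(\sigma_i) + \hatp_{\Gamma}(\sigma_i)\big)$ against $\OPT$, the optimum of the original robust instance. For the nominal part, the crucial point is that $\OPT$ for $R|\UG|C_{\max}$ is at least the optimum of $\mathcal{I}'$ (dropping deviations can only decrease the makespan, since $C(\sigma_i) \le C_{\Gamma}(\sigma_i)$ for every schedule), so the $c$-approximation guarantee gives $\max_i \barp(\sigma_i) \le c \cdot \OPT$. For the deviation part, I would argue that $\max_i \hatp_{\Gamma}(\sigma_i) \le \OPT$ for \emph{any} schedule, and in particular for $\sigma$: on the machine attaining the maximum, the $\Gamma$ jobs counted in $\hatp_{\Gamma}(\sigma_i)$ can be made to deviate simultaneously (they number at most $\Gamma$), and the resulting load of that machine alone is already a valid scenario value, hence a lower bound on the robust optimum. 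Combining the two bounds via the triangle-like inequality on the max yields $C_{\Gamma}(\sigma) \le \max_i \barp(\sigma_i) + \max_i \hatp_{\Gamma}(\sigma_i) \le c\cdot\OPT + \OPT = (c+1)\OPT$.

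The main subtlety I expect is making the deviation-part lower bound fully rigorous: I need to confirm that $\hatp_{\Gamma}(\sigma_i) \le C_{\Gamma}(\sigma_i^*)$ holds against the \emph{optimal} schedule's structure, not just formally. The cleanest route is to observe that $\hatp_{\Gamma}(\sigma_i) \le \barp(\sigma_i) + \hatp_{\Gamma}(\sigma_i) = C_{\Gamma}(\sigma_i) \le C_{\Gamma}(\sigma)$ for $\sigma$ itself is circular, so instead I would lower-bound $\OPT$ directly: in the optimal robust schedule every machine's worst-case load is at least its own deviation contribution, and more to the point, any feasible schedule (including the adversary's freedom to pick $\Gamma$ deviating jobs globally) forces $\OPT \ge \max_i \hatp_{\Gamma}(\sigma^*_i)$ where $\sigma^*$ is optimal. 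A clean way to see $\max_i \hatp_{\Gamma}(\sigma_i) \le \OPT$ for the returned $\sigma$ is: one can always choose the deviating jobs to be exactly the $\Gamma$ largest-deviation jobs on the most-loaded-by-deviation machine, making that single machine's deviation term a realizable scenario cost, which no schedule can avoid paying more cheaply than $\OPT$.

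The separation into a nominal lower bound (inherited from the relaxed instance) and a deviation lower bound (inherited from adversarial feasibility) is the conceptual heart; the arithmetic is then immediate, so I would keep the write-up focused on justifying these two inequalities cleanly rather than on the final summation.
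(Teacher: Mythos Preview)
Your deviation bound fails. The claim $\max_i \hatp_{\Gamma}(\sigma_i) \le \OPT$ does \emph{not} hold for an arbitrary schedule $\sigma$, and in particular need not hold for the schedule returned by the $c$-approximation on the nominal-only instance. The scenario you describe (make the $\Gamma$ largest-deviation jobs on machine $i$ deviate) lower-bounds $C_\Gamma(\sigma)$, not $\OPT$: the optimal schedule $\sigma^*$ may place those same jobs on machines where their deviations are tiny, so the adversary gains nothing from that scenario against $\sigma^*$. Concretely, take $\Gamma=1$, two machines, two jobs with $\barp_{ij}\equiv 0$, $\hatp_{1,1}=\hatp_{2,2}=10$, $\hatp_{2,1}=\hatp_{1,2}=1$. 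Then $\OPT=1$ (put job $1$ on machine $2$ and job $2$ on machine $1$), but your nominal instance is all zeros, so the $c$-approximation may return the schedule with job $j$ on machine $j$, giving $\hatp_\Gamma(\sigma_i)=10$ on both machines and $C_\Gamma(\sigma)=10=10\cdot\OPT$.

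This is precisely why the paper does \emph{not} strip the deviations entirely. It fixes a target $T$ and keeps $\hatp_{ij}$ inside the processing time whenever $\hatp_{ij}>T/\Gamma$; only the small deviations are discarded. That way any schedule the black box returns already controls the large-deviation placements, and the ignored small deviations can contribute at most $\Gamma\cdot T/\Gamma=T$ per machine, which is where the extra ``$+1$'' in $(c+1)$ comes from. Your decomposition into nominal and deviation parts is the right instinct, but the auxiliary instance must retain enough deviation information to prevent the approximation algorithm from making assignments like the one above; the threshold-based split $B_i$ versus $S_i$ is exactly the missing idea.
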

    \begin{proof}[Proof of Theorem~\ref{th-main}]
      We design a dual approximation, i.e., given an instance $I$ of $R|\UG|C_{\max}$ and an threshold $T$,
      we either give a schedule $\sigma$ of $I$ with $C_{\Gamma}(\sigma) \le (c + 1) T$, or prove that $T < \OPT(I)$.
Using a binary search on $T$ this will imply a $(c+1)$-approximation algorithm.
      
      For that, given an instance $I=(J, M, \op, \hp)$ of $R|\UG|C_{\max}$, and $T$ the current threshold, 
      our objective is to define an instance $I' =(J, M, p)$ of the classical  $R||C_{\max}$ problem.
      The transformation of a solution for $I'$ to a solution for $I$ will be straightforward since the jobs and machines will be the same.

      Given a machine $i$, let $B_i = \{j | \hat p_{ij} > \frac T \Gamma \}$ and $S_i = J \setminus B_i$. 
      Define
      \begin{equation}
        p_{ij} := \begin{cases}
          \overline p_{ij} + \hat p_{ij} & \text{if } j \in B_i \\
          \overline p_{ij} & \text{otherwise.}
        \end{cases}
      \end{equation}

      Let us now prove that (1) if $\OPT(I')>T$ then we have $\OPT(I)>T$, and (2) every schedule $\sigma$ with makespan $C^{I'}(\sigma)$ in $I'$ has a makespan
      at most $C^{I'}(\sigma)+T$ in $I$ ($C_{\Gamma}(\sigma) \le C^{I'}(\sigma)+T$).
     
      For (1), we prove that $\OPT(I) \le T$ implies that $\OPT(I') \le T$.
      Let $\sigma$ be an optimal solution of $I$ and $i$ a machine.
      $C_{\Gamma}(\sigma) \le T$ implies that $C_{\Gamma}(\sigma_i) \le T$ for any $i$, and thus that $\barp(\sigma_i)+\hatp_{\Gamma}(\sigma_i) \le T$.
      Now, observe that $B_i \subseteq \Gamma(\sigma_i)$. Indeed, assume towards contradiction that there exists $j \in B_i \setminus \Gamma(\sigma_i)$.
      This implies that $|\Gamma(\sigma_i)|=\Gamma$. As by definition, any $j' \in \Gamma(\sigma_i)$ has $\hatp_{ij'} \ge \hatp_{ij} > \frac{T}{\Gamma}$,
      we get that $\hatp_{\Gamma}(\sigma_i) > T$, a contradiction.
      This implies $C^{I'}(\sigma_i) = \barp(\sigma_i)+\hatp(B_i) \le \barp(\sigma_i)+\hatp_{\Gamma}(\sigma_i) \le T$.


      For (2), let $\sigma$ be a solution of $I'$.
      Let $i \in M$. Observe that $ \hatp(\Gamma(\sigma_i)) \le \hatp(B_i)+T$ as $\Gamma(\sigma_i)$ contains at most $\Gamma$ jobs in $\sigma_i \setminus B_i$,
      and these jobs have $\hatp_{ij} \le \frac{T}{\Gamma}$. Thus, $C_{\Gamma}(\sigma_i)=\barp(\sigma_i)+\hatp_{\Gamma}(\sigma_i) \le \barp(\sigma_i)+\hatp(B_i)+T=C^{I'}(\sigma_i)+T$.
      

    Thus, given a $T$ and $I$ we create $I'$ as above and run the $c$-approximation for $R||C_{\max}$ to get a solution $\sigma$.
    If $C^{I'}(\sigma) > cT$ then $\OPT(I')>T$, implying $\OPT(I) > T$, and thus we reject $T$.
    Otherwise, we consider $\sigma$ as a solution for $I$, and $C_\Gamma(\sigma) \le (c+1)T$.

    \qed \end{proof}

    Using the well-known 2-approximation algorithm from \cite{lenstra1990approximation}, we obtain immediately the following.
    \begin{corollary}
      There is a $3$-approximation for $R|\UG|C_{\max}$.
    \end{corollary}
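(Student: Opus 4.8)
The plan is to invoke Theorem~\ref{th-main} with the largest constant $c$ for which a polynomial $c$-approximation for the classical $R||C_{\max}$ problem is known. Theorem~\ref{th-main} already does all the conceptual work: it reduces $R|\UG|C_{\max}$ to $R||C_{\max}$ while losing only an additive $+1$ in the ratio. Hence the corollary follows simply by selecting a concrete algorithm for the right-hand side of that implication and reading off $c+1$.

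The classical ingredient I would use is the LP-rounding algorithm of Lenstra, Shmoys, and Tardos~\cite{lenstra1990approximation}, which is a polynomial $2$-approximation for $R||C_{\max}$. Recall briefly why $c=2$: one guesses a target makespan $T$, discards every job--machine pair with $p_{ij}>T$, solves the natural assignment LP, and rounds an extreme-point solution; the support of such an extreme point forms a (pseudo)forest, so each machine receives at most one fractionally-assigned job in addition to its integral load, and that extra job has processing time at most $T\le\OPT$. This yields makespan at most $2T$. Plugging $c=2$ into Theorem~\ref{th-main} gives the claimed $3$-approximation.

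The only point requiring care---and it is routine rather than a genuine obstacle---is ensuring that the overall running time is polynomial. Theorem~\ref{th-main} is phrased as a dual (threshold) approximation combined with a binary search on $T$, and the rejection guarantee $C^{I'}(\sigma)>cT \Rightarrow \OPT(I)>T$ is exactly what makes that binary search valid. To keep the number of iterations polynomial I would restrict the search to a discrete grid derived from the encoding length of $\op$ and $\hp$, so that $\OPT$ is located within an interval of polynomially bounded relative width; this is the standard treatment. Each iteration runs the polynomial algorithm of~\cite{lenstra1990approximation} once, so the whole procedure is polynomial, completing the proof.
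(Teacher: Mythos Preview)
Your proposal is correct and matches the paper's approach exactly: the paper simply invokes Theorem~\ref{th-main} with the $2$-approximation of Lenstra, Shmoys, and Tardos~\cite{lenstra1990approximation} to obtain $c+1=3$. Your additional remarks on why $c=2$ and on the polynomial binary search are accurate but go beyond what the paper states for this corollary.
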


    Since by this reduction uniform machines stay uniform we also obtain the following using the EPTAS of~\cite{DBLP:conf/icalp/JansenKV16} for the classical $Q||C_{\max}$ problem.
    \begin{corollary}
      For every $\epsilon > 0$ there is a
      $(2 + \epsilon)$-approximation for $Q|\UG|C_{\max}$ running in time $2^{O(1/\epsilon\log(1/\epsilon)^4)}+poly(n)$.
    \end{corollary}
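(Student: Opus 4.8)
The plan is to apply Theorem~\ref{th-main} with $c=1+\epsilon$, instantiating the generic $c$-approximation for $R||C_{\max}$ by the EPTAS of~\cite{DBLP:conf/icalp/JansenKV16} for $Q||C_{\max}$ run with accuracy parameter $\epsilon$. Since Theorem~\ref{th-main} promotes any $c$-approximation into a $(c+1)$-approximation, a $(1+\epsilon)$-approximation immediately yields the claimed ratio $2+\epsilon$. The running time would then be dominated by the calls to the EPTAS: the reduction $I\mapsto I'$ is computed in polynomial time, the binary search inside the dual approximation adds only a polynomial number of rounds, and the EPTAS itself runs in $2^{O(1/\epsilon\log(1/\epsilon)^4)}+\mathrm{poly}(n)$, so the overall bound is of the stated form. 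Hence the whole statement reduces to a single point: that the instance $I'$ produced inside the proof of Theorem~\ref{th-main} is again a \emph{uniform} instance, so that the $Q||C_{\max}$ EPTAS is applicable.

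To verify this I would substitute the uniform data $\barp_{ij}=\barp_j/s_i$ and $\hatp_{ij}=\hatp_j/s_i$ into the construction of $I'$. The set of big jobs on machine $i$ becomes $B_i=\{j:\hatp_j/s_i>T/\Gamma\}=\{j:\hatp_j>s_iT/\Gamma\}$, and the transformed processing time is $p_{ij}=(\barp_j+\hatp_j)/s_i$ for $j\in B_i$ and $p_{ij}=\barp_j/s_i$ otherwise; equivalently $s_i\,p_{ij}=\barp_j+\hatp_j\cdot\mathbf{1}[\hatp_j>s_iT/\Gamma]$. The instance $I'$ is uniform exactly when this quantity is independent of $i$, so the entire argument hinges on controlling the speed-dependence of the test $\hatp_j>s_iT/\Gamma$.

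This speed-dependence is precisely the step I expect to be the main obstacle. Because the threshold $s_iT/\Gamma$ grows with the speed, a slow machine classifies more jobs as big than a fast one: a job of moderate deviation can be assigned the size $\barp_j+\hatp_j$ on the slow machines and $\barp_j$ on the fast machines, which is a genuinely \emph{unrelated} instance and not one to which the $Q||C_{\max}$ EPTAS applies verbatim. The clean case is that of identical machines, where all $B_i$ coincide and $I'$ is trivially uniform; for genuinely different speeds one must do more. A natural attempt is to round the speeds into geometric classes $s_i\in[(1+\epsilon)^k,(1+\epsilon)^{k+1})$, losing only a $(1+\epsilon)$ factor, and then force a single machine-independent size per job; the difficulty, however, is inherently \emph{across} classes, since a deviation that is negligible on the fast machines may dominate on the slow ones, so it is not clear the generic reduction can be kept uniform and a reduction tailored to the uniform model may be required. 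Resolving this is where the real work lies. Once $I'$ is certified uniform, the remainder is immediate: the EPTAS returns a schedule $\sigma$ with $C^{I'}(\sigma)\le(1+\epsilon)\OPT(I')$, and properties~(1)--(2) of Theorem~\ref{th-main} convert this into either $C_{\Gamma}(\sigma)\le(2+\epsilon)T$ or a certificate $T<\OPT(I)$, completing the dual approximation.
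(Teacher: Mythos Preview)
Your approach is exactly the paper's: plug the $Q||C_{\max}$ EPTAS of~\cite{DBLP:conf/icalp/JansenKV16} into Theorem~\ref{th-main} to obtain ratio $(1+\epsilon)+1=2+\epsilon$. The paper's entire justification for the corollary is the single clause ``Since by this reduction uniform machines stay uniform''.

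You have in fact gone further than the paper by actually checking this claim, and your computation is correct. With $\barp_{ij}=\barp_j/s_i$ and $\hatp_{ij}=\hatp_j/s_i$, the test $j\in B_i$ becomes $\hatp_j>s_iT/\Gamma$, so a job with $s_{i'}T/\Gamma<\hatp_j\le s_iT/\Gamma$ for speeds $s_{i'}<s_i$ is big on the slow machine $i'$ and small on the fast machine $i$, giving $s_ip_{ij}=\barp_j$ while $s_{i'}p_{i'j}=\barp_j+\hatp_j$. The instance $I'$ is therefore genuinely \emph{not} a $Q||C_{\max}$ instance, and the paper's one-line assertion is, as written, not correct.

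So your proposal is incomplete, but only because you have honestly flagged a gap that the paper glosses over. The corollary does not follow from Theorem~\ref{th-main} together with the $Q||C_{\max}$ EPTAS without additional work: one would need either a variant of the reduction with a speed-independent threshold (and a re-verification of properties~(1) and~(2) in the proof of Theorem~\ref{th-main} for it), or the observation that after rounding speeds to powers of $1+\epsilon$ the instance $I'$ has a bounded number of machine types, so that an EPTAS for few machine types applies instead. Neither fix is what the paper states, and your skepticism is warranted.
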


    \section{A $2-\epsilon$ inapproximability for unrelated machines}
    \label{sec:inapprox}

    For the classical $R||C_{\max}$ problem, when all $p_{ij} \in \{1,\infty\}$, deciding if the optimal value is at most $1$ is polynomially solvable
    as it can be reduced to finding a matching in a bipartite graph.
    The result below shows that answering the same question for $R|\UG|C_{\max}$ is $\NP$-complete.

 \begin{theorem}
      \label{th-inapprox}
      Given an instance $I$ of $R|\UG|C_{\max}$, it is \NP-complete to decide if $\OPT(I) \le 1$ or $\OPT(I) \ge 2$, and thus for any $\epsilon >  0$ is no $(2-\epsilon)$-approximation algorithm for $R|\UG|C_{\max}$ unless $\P=\NP$, even for $\Gamma=1$ and when each job can be scheduled on at most $3$ machines.
 \end{theorem}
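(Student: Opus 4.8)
The plan is to show both membership and hardness for the promise (gap) problem. Membership in $\NP$ is immediate: since $\Gamma=1$, for any schedule $\sigma$ we have $C_{\Gamma}(\sigma)=\max_{i\in M}\big(\barp(\sigma_i)+\max_{j\in\sigma_i}\hatp_{ij}\big)$, which is computable in polynomial time, so a schedule achieving value $\le 1$ is a polynomial certificate for the side $\OPT(I)\le 1$. For hardness I would reduce from $3$-SAT, producing a robust instance of restricted-assignment type (each job allowed on a fixed set of machines, forbidden elsewhere) so that satisfiable formulas map to instances with $\OPT\le 1$ and unsatisfiable ones to instances with $\OPT\ge 2$.

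The mechanism I would exploit is specific to the robust objective with $\Gamma=1$: a machine carrying a single \emph{nominal} job ($\barp_{ij}=1,\ \hatp_{ij}=0$) has load $1$, a machine carrying any number of \emph{deviation} jobs ($\barp_{ij}=0,\ \hatp_{ij}=1$) also has load $1$ (only the largest deviation counts), but a machine carrying one nominal job together with even one deviation job has load $1+1=2$. Thus ``do not co-locate a nominal and a deviation job'' is exactly the constraint imposed by the threshold $1$, and I use it as a consistency gadget. For each variable $x$ I create two machines $M_x^{T},M_x^{F}$ and a nominal selector job $N_x$ allowed only on these two machines; its placement encodes the truth value of $x$ and blocks the chosen machine against deviation jobs. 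For each clause $c$ I create one deviation job $D_c$ allowed on exactly the three machines dual to its literals, sending a positive literal $x$ to $M_x^{F}$ and a negative literal $\overline x$ to $M_x^{T}$ — in each case the machine that becomes free precisely when that literal is satisfied. Every $N_x$ then touches $2$ machines and every $D_c$ touches $3$, giving the claimed degree bound.

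For correctness of the gap I would argue as follows. If the formula is satisfiable, place each $N_x$ according to a satisfying assignment and route each $D_c$ to a machine freed by one of its satisfied literals; then every machine holds either a lone nominal job or only deviation jobs, so all loads equal $1$ and $\OPT(I)\le 1$. Conversely, any schedule is forced to place every $N_x$ on $M_x^{T}$ or $M_x^{F}$, which defines a total truth assignment; if the formula is unsatisfiable, some clause $c$ is falsified, so all three machines on which $D_c$ is allowed are already blocked by their nominal selector jobs, and wherever $D_c$ is placed it meets a nominal job and creates load $2$. Hence $\OPT(I)\ge 2$, and a $(2-\epsilon)$-approximation would decide $3$-SAT.

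The step I expect to be the main obstacle is ensuring the gap is genuinely $2$ rather than merely strictly above $1$: this is exactly where the robust term $\hatp_{\Gamma}$ is essential, since in a classical instance a forbidden co-location would raise a load only to $1+\epsilon$. Concretely I must check that there is no ``escape'' from the gadget — that piling arbitrarily many deviation jobs on a free machine never exceeds load $1$, that the selector jobs of distinct variables never collide on a common machine, and that no clever reuse of a partially loaded machine lets a stuck $D_c$ avoid a nominal job. The degree-$3$ bound itself needs no work, as it follows from clauses having three literals; the whole content is the tight load accounting driven by $\hatp_{\Gamma}(\sigma_i)=\max_{j\in\sigma_i}\hatp_{ij}$ when $\Gamma=1$.
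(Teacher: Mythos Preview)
Your proposal is correct and is essentially the same reduction as in the paper: variable gadgets consisting of two machines plus one nominal selector job, clause gadgets consisting of one deviation-only job allowed on the three machines dual to its literals, with the $1$-versus-$2$ gap driven by the fact that under $\Gamma=1$ any number of deviation jobs alone contribute load $1$ while a nominal--deviation collision contributes $2$. The only cosmetic difference is the labeling convention (you put $N_x$ on $M_x^T$ to encode ``true'', the paper puts the variable job on $j_t$ to encode ``false''), and your explicit mention of $\NP$ membership, which the paper leaves implicit.
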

 
 \begin{proof}
Let us define a reduction from $3$-SAT to $R|\UG|C_{\max}$ with $\Gamma=1$.
Let $I_0$ be an instance of $3$-SAT with clauses $\{C_i, i \in [m_0]\}$ and variables $\{x_j, j \in [n_0]\}$. Each $C_i$ is of the form $l^1_i \vee l^2_i \vee l^3_i$ where $l^k_i \in \{x_j,\bar{x}_j\}$ for some $j$. We define an instance $I$ of $R|\UG|C_{\max}$ with $m=2n_0$ machines and $n=n_0+m_0$ jobs as follows. To each variable $x_j$ we associate two machines $\{j_f,j_t\}$. 
We create a set of $n_0$ variable jobs where for any $j \in [n_0]$, $\op_{j_fj}=\op_{j_tj}=1$, $\op_{i'j}=\infty$ for any other $i'$, and $\hp_{ij}=0$ for any $i \in [m]$.
For any clause $C_i$, $i \in [m_0]$ we define $M_i$: the set of $3$ machines corresponding to literals $\{l^k_i\}$ satisfying $C_i$.
For example, if $C_7 = x_1 \vee \bar{x}_3 \vee x_5$ then $M_7 = \{1_t,3_f,5_t\}$.
We now define a set of $m_0$ clause jobs as follows. For any $j \in [n_0+1,n_0+m_0]$, job $j$ represents clause $C_{j-n_0}$ with $\hp_{ij}=1$ iff $i \in M_{j-n_0}$, $\hp_{i'j}=\infty$ for any other $i'$, 
and $\op_{ij}=0$ for any $i\in [m]$.
For example, job $j=n_0+7$ is associated to $C_7$ where in particular $\hp_{1_tj}=\hp_{3_fj}=\hp_{5_tj}=1$.
Notice that each clause job can be scheduled on at most $3$ machines.
Let us now verify that $I_0$ is satisfiable iff $\OPT(I)=1$.

$\Rightarrow$.
Suppose $I_0$ is satisfied by assignment $a$. For any $j \in [n_0]$, we schedule $j$ on $j_t$ if $x_j$ is set to false in $a$ and on $j_f$ otherwise.
For any $j \in [n_0+1,n_0+m_0]$, we schedule job $j$ on any machine $i \in M_{j-n_0}$ corresponding to a literal satisfying $C_i$ in assignment $a$.
Notice that in this schedule, a machine either receives exactly one variable job, implying a makespan of $1$, or only clause jobs, also implying a makespan of $1$ as $\Gamma=1$. 

$\Leftarrow$.
Suppose that $\OPT(I)=1$ and let us define an assignment $a$. This implies that any variable job $j$ is either scheduled on machine $j_f$, in which case we set $x_j$ to true, 
or on machine $j_t$, in which case we set $x_j$ to false. As $\OPT(I)=1$, and clause job $j \in [n_0+1,n_0+m_0]$ is scheduled on a machine $i \in M_{j-n_0}$ that did not receive a variable job, 
implying that clause $j-n_0$ is satisfied by literal $i$.
 \qed \end{proof}

\section{A PTAS for identical machines}
\label{sec:PTAS}
Note that we can assume that $m < n$. If $m \ge n$, a trivial schedule
with every job on a different machine is optimal. In some problems the encoding length
may be much smaller than $m$, when $m$ is only encoded in binary. However, here a polynomial time
algorithm is allowed to have a polynomial dependency on $m$.

Recall that for the $P|\UG|C_{\max}$ problem, given two $n$ dimensional vectors $\hp$ and $\barp$ and the number of machine $m$,
the objective is to create a schedule $\sigma$ that minimizes $\max_{i\in M} C_{\Gamma}(\sigma_i)$.
Recall also that  $C_{\Gamma}(\sigma_i)=\barp(\sigma_i)+\hatp_{\Gamma}(\sigma_i)$, where $\barp(\sigma_i)= \sum_{j\in \sigma_i}\barp_{j}$,
and $\hatp_{\Gamma}(\sigma_i)$ is the sum of the $\hp_j$ values of the $\Gamma$ largest jobs (w.r.t. $\hp_j$) of $\sigma_i$ (or the sum of all $\hp_j$ values
if $|\sigma_i| \le \Gamma$).
To obtain a PTAS for $P|\UG|C_{\max}$, we will reduce to the following problem, which admits an EPTAS (see \cite{jansen2017eptas}).

\begin{problem}{\sc Unrelated Machines with few Machine Types}
\label{pb:rcmaxfewtypes}
  \begin{itemize}
  \item Input: $n$ jobs and a set $M$ of $m$ machines with processing times $p_{ij} \ge 0$ for job $j$
  on machine $i$. Moreover, there is a constant $k$ and machine types $T_1\dot\cup\cdots \dot\cup T_k = \{1,\dotsc,m\}$, such that every machine within a type behaves the same.
  Formally, for every $k'$, every $i, i'\in T_{k'}$ and every $j\le n$ it holds that $p_{ij}=p_{i'j}$ 
  \item Output: find a schedule $\sigma:J\rightarrow M$
  \item Objective function: minimize makespan $C(\sigma)=  \max_{i\in M} C(\sigma_i)$, where $C(\sigma_i) = \sum_{j\in\sigma_i} p_{ij}$
  \end{itemize}
\end{problem}

Notice that the EPTAS of~\cite{jansen2017eptas}
for this problem provides an $(1+\epsilon)$-approximation running in time $f(|I|,\epsilon,k)=2^{O(k\log(k)\frac{1}{\epsilon} \log^4(\frac{1}{\epsilon}))}+poly(|I|)$.

We also introduce the following decision problem.

\begin{problem}{\sc Unrelated Machines with few Machine Types and capacities}
  \label{prob:cap}
  \begin{itemize}
  \item Input: as above, but in addition every machine $i$ has a capacity $c_{i}\in(0,1]$.
Moreover, capacities are the same among a type (for any $k' \in [k]$, for any $i,i' \in T_{k'}$, $c_i = c_{i'}$) 
  \item Output: decide if there is a schedule where $C(\sigma_i)  \le c_i $ for any $i$. 
  \end{itemize}
\end{problem}

Notice that the EPTAS for Problem~\ref{pb:rcmaxfewtypes} allows to approximately decide Problem~\ref{prob:cap} in the following sense.
\begin{lemma}\label{lemma:eptasrcmaxcap}
  There is an algorithm that for any $\epsilon > 0$, either outputs a schedule with $C(\sigma_i) \le (1+\epsilon)\cdot c_i$ for any $i$,
  or reject the instance, proving that there is no schedule with $C(\sigma_i) \le c_i$ for any $i$. 
This algorithm runs in time $f(|I|,\epsilon,k)$ where $f$ is the complexity of the above EPTAS to get a $(1+\epsilon)$-approximation.
\end{lemma}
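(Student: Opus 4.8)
The plan is to reduce the capacitated feasibility question to a pure makespan-minimization instance by rescaling the processing times on each machine by its capacity, and then to invoke the EPTAS for Problem~\ref{pb:rcmaxfewtypes}. Concretely, from the instance $I$ I would build an instance $I'$ with the same jobs and machines but with processing times $p'_{ij} := p_{ij}/c_i$. Since $c_i \in (0,1]$ these are well defined and nonnegative, and because both the $p_{ij}$ and the $c_i$ are constant within each type $T_{k'}$, the rescaled times $p'_{ij}$ are also constant within each type. Hence $I'$ still has only $k$ machine types, its encoding length stays polynomial in that of $I$, and the EPTAS may be applied with the same parameter $k$. I will write $C'$ for the makespan in $I'$, i.e.\ $C'(\sigma) = \max_{i\in M} \sum_{j\in\sigma_i} p'_{ij}$.

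The key observation is that scaling turns the capacity constraint into a makespan threshold of $1$: for any schedule $\sigma$ and any machine $i$ we have $C'(\sigma_i) = \sum_{j\in\sigma_i} p_{ij}/c_i = C(\sigma_i)/c_i$, so $C(\sigma_i)\le c_i$ holds for all $i$ if and only if $C'(\sigma)\le 1$. I would then run the EPTAS for Problem~\ref{pb:rcmaxfewtypes} on $I'$ with accuracy $\epsilon$, obtaining a schedule $\sigma$ with $C'(\sigma) \le (1+\epsilon)\,\OPT(I')$, and base the decision on a single comparison of $C'(\sigma)$ against $1+\epsilon$. If $C'(\sigma) \le 1+\epsilon$, then for every machine $C(\sigma_i) = c_i\,C'(\sigma_i) \le (1+\epsilon)\,c_i$, and I output $\sigma$. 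Otherwise $C'(\sigma) > 1+\epsilon$; by the approximation guarantee $\OPT(I') \ge C'(\sigma)/(1+\epsilon) > 1$, so no schedule of $I'$ has makespan at most $1$, which by the equivalence above means no schedule of $I$ satisfies $C(\sigma_i)\le c_i$ for all $i$, and I reject. The running time is that of the EPTAS on $I'$, namely $f(|I|,\epsilon,k)$, up to the polynomial preprocessing of the rescaling.

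The argument is essentially routine, so there is no serious obstacle; the two points that need care are verifying that the rescaling preserves the bound of $k$ machine types (so that the hypothesis of the EPTAS is met, and in particular that identical processing times within a type are not destroyed by dividing by a type-dependent but machine-uniform capacity), and, in the rejection case, reasoning through the contrapositive of the approximation guarantee—if $\OPT(I')\le 1$ the EPTAS would necessarily return a schedule of value at most $1+\epsilon$—rather than attempting to certify infeasibility of $I$ directly.
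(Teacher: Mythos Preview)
Your proposal is correct and follows essentially the same approach as the paper: rescale $p_{ij}$ by $1/c_i$ to turn the capacity constraints into a unit makespan threshold, run the EPTAS, and accept or reject based on whether the returned makespan is at most $1+\epsilon$. You are in fact slightly more careful than the paper in explicitly checking that the rescaling preserves the number of machine types.
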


\begin{proof}
  Let $A$ be the EPTAS of~\cite{jansen2017eptas} for Problem~\ref{pb:rcmaxfewtypes}. 
Given a input $I$ of Problem~\ref{prob:cap} we define an input $I'$ of Problem~\ref{pb:rcmaxfewtypes} in the following way.
  For every $j\le n$, scale $p_{ij}$ to $p_{ij} / c_i$. Then, if $A(I') \le (1+\epsilon)$, we can convert the solution found by $A$ into a solution for $I$ of makespan at most $(1+\epsilon) \cdot c_i$ for any $i$.
  Otherwise, as $A$ is a $(1+\epsilon)$-approximation, it implies that $\OPT(I') > 1$, and thus that no solution can have makespan at most $c_i$ for any $i$.
\qed \end{proof}

Let us now describe the PTAS for $P|\UG|C_{\max}$. Our objective is to provide a $(1+O(\epsilon))$ dual approximation for $P|\UG|C_{\max}$. The constant factor with $\epsilon$ can be ignored, since we can divide $\epsilon$ with this constant
in the preprocessing.

\paragraph{1. Guess the makespan and scale $\OPT$ to $1$.}
Let $I$ be an input of $P|\UG|C_{\max}$, and $T$ be a positive value (representing the current threshold).
We start by redefining $I$ by scaling $p_j := \frac{p_j}{T}$.
Our objective is now to produce a schedule $\sigma$ with $C_{\Gamma}(\sigma) \le 1+\epsilon$, or to prove that $\OPT(I) > 1$.

\paragraph{2. Rounding deviations.}
Let us now define $I^1$ (having vectors $\overline p^1$ and $\hatp^1$) in the following way.
For any $j$, if $\hp_j < \epsilon/\Gamma$ then we set $\hp^1_j\leftarrow 0$. Intuitively, this
will only result in an error of at most $\Gamma \cdot \epsilon / \Gamma$ on
every machine. Otherwise ($\hp_j \ge \epsilon/\Gamma$), we define $\hp^1_j$ by rounding  $\hp_j$ to the closest smaller value of the form
$\epsilon/\Gamma \cdot (1 + \epsilon)^i$.

\begin{observation}\label{obs:I1}
In $I^1$ there are at most $O(1/\epsilon\log(\Gamma/\epsilon))$
deviation values, and at most $O(1/\epsilon\log(1/\epsilon))$ deviation values in the interval $[\epsilon/\Gamma, 1/\Gamma]$.
\end{observation}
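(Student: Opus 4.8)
The plan is to observe that the rounding in Step~2 produces a \emph{geometric} grid of possible deviation values, and then to count how many terms of that grid can actually occur, using the fact that the dual approximation lets us assume every deviation is at most $1$.

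First I would record that, after the scaling in Step~1, the only nonzero values $\hatp^1_j$ can take are of the form $\frac{\epsilon}{\Gamma}(1+\epsilon)^i$ for integers $i \ge 0$, together with the value $0$. Next I would argue that we may assume $\hatp_j \le 1$ for every job $j$: since $\Gamma \ge 1$, a job with $\hatp_j > 1$ placed on any machine $i$ forces $\hatp_{\Gamma}(\sigma_i) \ge \hatp_j > 1$, hence $C_{\Gamma}(\sigma_i) > 1$ in every schedule, so the dual approximation may simply reject $T$ (certifying $\OPT(I) > 1$). Consequently the only relevant exponents are those with $\frac{\epsilon}{\Gamma}(1+\epsilon)^i \le 1$, i.e.\ $i \le \log_{1+\epsilon}(\Gamma/\epsilon)$.

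It then remains to count. The number of admissible exponents is at most $\log_{1+\epsilon}(\Gamma/\epsilon) + 1 = \frac{\ln(\Gamma/\epsilon)}{\ln(1+\epsilon)} + 1$. Using the standard estimate $\ln(1+\epsilon) \ge \epsilon/2$ for $\epsilon \in (0,1]$, this is $O\!\left(\tfrac{1}{\epsilon}\log(\Gamma/\epsilon)\right)$; adding the single value $0$ does not change the asymptotics, which yields the first bound. For the second bound, a value $\frac{\epsilon}{\Gamma}(1+\epsilon)^i$ lies in $[\epsilon/\Gamma,\, 1/\Gamma]$ exactly when $(1+\epsilon)^i \le 1/\epsilon$, i.e.\ $0 \le i \le \log_{1+\epsilon}(1/\epsilon)$, and the same estimate bounds this count by $O\!\left(\tfrac{1}{\epsilon}\log(1/\epsilon)\right)$.

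I do not expect a genuine obstacle here: the statement is a direct count of a geometric grid. The only point requiring care is the justification that deviations exceeding $1$ may be discarded up front, which is precisely what caps the largest exponent; without this cap the number of values would depend on the (unbounded) input magnitudes. The conversion from $\log_{1+\epsilon}$ to $\tfrac{1}{\epsilon}\log(\cdot)$ through $\ln(1+\epsilon) = \Theta(\epsilon)$ is routine.
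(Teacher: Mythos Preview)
Your argument is correct and is precisely the intended reasoning; the paper states this as an \emph{observation} without proof, relying implicitly on the same geometric-grid count and the bound $\hatp_j\le 1$ that you make explicit. There is nothing further to compare.
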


In the following, we will denote by $C^{I'}_\Gamma(\sigma)$ the cost of $\sigma$ for instance $I'$.

\begin{observation}\label{obs:IversI1}
  If $\OPT(I) \le 1$ then $\OPT(I^1) \le 1$.
  If we get solution $\sigma^1$ of $I^1$, then $C^I_{\Gamma}(\sigma^1) \le (1+\epsilon)C^{I^1}_{\Gamma}(\sigma^1)+\epsilon$
\end{observation}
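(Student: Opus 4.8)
The plan is to reduce everything to a per-job comparison of the two deviation vectors and then lift it to the top-$\Gamma$ sums that define $C_{\Gamma}$. First I would record the two elementary bounds that follow directly from the rounding rule. Since we either set a value to $0$ or round it down, we have $\hatp^1_j \le \hatp_j$ for every job $j$. Conversely, $\hatp_j \le (1+\epsilon)\hatp^1_j + \epsilon/\Gamma$: if $\hatp_j < \epsilon/\Gamma$ then $\hatp^1_j=0$ and the right-hand side is $\epsilon/\Gamma > \hatp_j$; otherwise $\hatp^1_j$ is the largest value of the form $\epsilon/\Gamma\cdot(1+\epsilon)^i$ not exceeding $\hatp_j$, so $\hatp_j < (1+\epsilon)\hatp^1_j$. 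Note also that the nominal times are untouched, so $\barp(\sigma_i)$ is identical in $I$ and $I^1$ on every machine. The one structural fact I would use repeatedly is that $\hatp_\Gamma(X)$ equals the maximum of $\sum_{j\in Y}\hatp_j$ over all $Y\subseteq X$ with $|Y|\le\Gamma$ (and analogously for $\hatp^1_\Gamma$); this lets me compare top-$\Gamma$ sums even when rounding reorders the jobs.

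For the first claim, take an optimal schedule $\sigma$ of $I$ and use it unchanged in $I^1$. Fixing a machine $i$ and letting $G$ be the $\Gamma$ jobs realizing $\hatp^1_\Gamma(\sigma_i)$, the bound $\hatp^1_j\le\hatp_j$ gives $\hatp^1_\Gamma(\sigma_i)=\sum_{j\in G}\hatp^1_j\le\sum_{j\in G}\hatp_j\le\hatp_\Gamma(\sigma_i)$, where the last step is the max-over-subsets characterization. Hence $C^{I^1}_\Gamma(\sigma_i)\le C^I_\Gamma(\sigma_i)\le 1$ for every $i$, so $\OPT(I^1)\le 1$.

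For the second claim, fix a schedule $\sigma^1$ and a machine $i$, and let $H$ be the set of $\Gamma$ jobs realizing $\hatp_\Gamma(\sigma^1_i)$. Using the upper bound on each $\hatp_j$ together with $|H|\le\Gamma$,
\[
\hatp_\Gamma(\sigma^1_i)=\sum_{j\in H}\hatp_j\le (1+\epsilon)\sum_{j\in H}\hatp^1_j+|H|\cdot\frac{\epsilon}{\Gamma}\le (1+\epsilon)\hatp^1_\Gamma(\sigma^1_i)+\epsilon,
\]
where I again bound $\sum_{j\in H}\hatp^1_j$ by $\hatp^1_\Gamma(\sigma^1_i)$ via the max-over-subsets characterization. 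Adding the trivial $\barp(\sigma^1_i)\le(1+\epsilon)\barp(\sigma^1_i)$ yields $C^I_\Gamma(\sigma^1_i)\le(1+\epsilon)C^{I^1}_\Gamma(\sigma^1_i)+\epsilon$, and taking the maximum over $i$ gives the claim.

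The only delicate point — the step I would be most careful about — is that rounding can change which $\Gamma$ jobs count as ``largest'', so one cannot simply match the top-$\Gamma$ sets across the two instances. The reformulation of $\hatp_\Gamma$ as a maximum over $\Gamma$-subsets sidesteps this entirely, and the additive $\epsilon$ slack in the second inequality arises precisely from the at most $\Gamma$ copies of the per-job additive error $\epsilon/\Gamma$.
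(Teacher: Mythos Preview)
Your proof is correct. The paper states this as an observation without proof, so there is nothing to compare against; your argument supplies exactly the details the authors left implicit. The per-job bounds $\hatp^1_j \le \hatp_j$ and $\hatp_j \le (1+\epsilon)\hatp^1_j + \epsilon/\Gamma$, together with the max-over-$\Gamma$-subsets characterization of $\hatp_\Gamma$ to handle the possible reordering, are the natural and essentially unique way to verify the claim.
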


It only remains now to either produce a good solution of $I^1$ (of cost at most $1+O(\epsilon)$), or prove that $\OPT(I^1) > 1$.


\paragraph{3. Machine thresholds.}

Given any solution $\sigma$ of $I^1$ such that $C^{I^1}_{\Gamma}(\sigma) \le 1$, we can associate to $\sigma$ an outline $t = \o(\sigma)$ which is defined as follows.
For any machine $i$ with more that $\Gamma$ jobs, the threshold value $t_i$ is such that any 
job on $i$ with $\hp_j > t_i$ deviates (belongs to $\Gamma(\sigma_i)$) and none of the jobs with
$\hp_j < t_i$ deviate. Notice that among jobs with $\hp_j = t_i$, some may deviate,
but not necessarily all.
For any machine $i$ with at most $\Gamma$ jobs, we define $t_i = 0$, implying again that any
job  with $\hp_j > t_i$ deviates on $i$.
Notice that in both cases we have $\hp_{\Gamma}(\sigma_i) \ge \Gamma \cdot t_i$.
Notice also that $C^{I^1}_{\Gamma}(\sigma) \le 1$ implies $t_i \le \frac{1}{\Gamma}$. Indeed, if we had $t_i > \frac{1}{\Gamma}$, there would be $\Gamma$ deviating jobs  with $\hp_j > t_i$, implying $C^{I^1}_{\Gamma}(\sigma_i) > 1$, a contradiction.
Let us denote by $\Delta$ the set of all possible values of a $t_i$. According to Observation~\ref{obs:I1} we have $|\Delta| =O(1/\epsilon\log(1/\epsilon))$.
Let $\T = \Delta^m$ be the set of all outlines (of solutions of cost at most $1$).

\begin{lemma}\label{lemma:guessoutline}
  Consider a solution $\sigma^{1*}$ of $I^1$ such that $C_\Gamma(\sigma^{1*}) \le 1$, and let $t^* = \o(\sigma^{1*})$.
  Then, we can guess in $m^{O(1/\epsilon\log(1/\epsilon))}$ time the vector $t^*$ (or a permutation thereof).
\end{lemma}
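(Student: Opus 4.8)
The plan is to exploit the fact that in $P|\UG|C_{\max}$ the machines are identical, so that recovering $t^*$ exactly is unnecessary: any permutation of an outline describes an equivalent family of schedules, and the lemma only asks for $t^*$ up to permutation. Consequently, the only information I need to pin down is, for each admissible threshold value, how many machines carry it. I would therefore represent an outline up to permutation by a count vector $(n_\delta)_{\delta\in\Delta}$, where $n_\delta\in\{0,1,\dotsc,m\}$ records the number of machines $i$ with $t_i=\delta$, subject to the single constraint $\sum_{\delta\in\Delta} n_\delta = m$.

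The core of the argument is then a counting estimate on the number of such count vectors. First I would observe that every outline $t\in\Delta^m$ induces exactly one such vector, and conversely every count vector satisfying the constraint can be realized (in polynomial time) as an explicit vector in $\Delta^m$ by listing, for each $\delta\in\Delta$, exactly $n_\delta$ copies of $\delta$. In particular, $t^*=\o(\sigma^{1*})$ yields some count vector, and reproducing that vector reproduces $t^*$ up to permutation. The number of candidate count vectors is at most $(m+1)^{|\Delta|}$, since each of the $|\Delta|$ coordinates takes a value in $\{0,\dotsc,m\}$ (dropping the sum constraint only enlarges the count). Using $|\Delta|=O(1/\epsilon\log(1/\epsilon))$ from Observation~\ref{obs:I1} and the bound $(m+1)\le m^2$ for $m\ge 2$ (the case $m=1$ being trivial), this is $(m+1)^{O(1/\epsilon\log(1/\epsilon))}=m^{O(1/\epsilon\log(1/\epsilon))}$, matching the claimed running time.

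To conclude, I would describe the guessing procedure: enumerate all count vectors $(n_\delta)_{\delta\in\Delta}$ with $\sum_\delta n_\delta=m$, and for each produce the corresponding outline in $\Delta^m$. Since $t^*$'s count vector is among those enumerated, at least one guess equals a permutation of $t^*$, which is exactly what the lemma requires. The total enumeration cost is the number of vectors times the polynomial cost of materializing each one, i.e. $m^{O(1/\epsilon\log(1/\epsilon))}$.

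There is no genuinely hard step here; the proof is a short counting argument, and the only point requiring care is making explicit why it suffices to recover $t^*$ merely up to permutation. That reduction is precisely where machine identicalness is used, and it is what collapses the naive bound $|\Delta|^m$ (exponential in $m$) down to the polynomial-in-$m$ bound $m^{O(|\Delta|)}$. I would flag that this use of identical machines is the conceptual crux distinguishing the argument from the unrelated-machine setting.
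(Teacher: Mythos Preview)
Your proposal is correct and follows essentially the same approach as the paper: both arguments guess, for each of the $|\Delta|=O(1/\epsilon\log(1/\epsilon))$ possible threshold values, how many machines carry it, yielding $(m+1)^{|\Delta|}=m^{O(1/\epsilon\log(1/\epsilon))}$ candidates. Your write-up is in fact more explicit than the paper's on why recovering $t^*$ only up to permutation suffices and on the counting estimate, but the underlying idea is the same.
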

\begin{proof}
As $t^* \in T$, all the $t^*_i$ have a value in $\{0\}\cup [\frac{\epsilon}{\Gamma},\frac{1}{\Gamma}]$.
Thus, as deviating values are rounded in $I^1$, there are only a constant number of possible
threshold value and we can guess them. For every possible threshold, we guess how many machines in the optimal
solution have it.
\qed \end{proof}

Thus, we can now assume that we know the vector $t^*$.

\paragraph{4. Constructing an instance with few machine types and capacities.}
To give an insight of the correct reduction defined below, let us first see what happen if we define an instance $I^2(t^*)$ of $R||C_{\max}$ as follows.
For simplicity, we also assume that there are no job with $\hp_j = t^*_i$ on each machine $i$ in the previously considered optimal solution of $I^1$.
For any machine $i$ and job $j$, define the processing time in $I^2(t^*)$ as $p_{ij}=\op_j+\hp_j$ if $\hp_j \ge t^*_i$, and $p_{ij}=\op_j$ otherwise.
Then, consider the following implications.
\begin{enumerate}
\item if $OPT(I^1) \le 1$, then $\OPT(I^2(t^*)) \le 1$
\item for any solution $\sigma'$ of $I^2(t^*)$, $C^{I^1}_{\Gamma}(\sigma') \le C^{I^2}(\sigma')$ (implying that if there exists $\sigma'$ with $C^{I^2}(\sigma') \le 1+\epsilon$, then we will have our solution for $I^1$ of cost $1+\epsilon$)
\end{enumerate}
While Property $(1)$ holds, this is not the case for Property $(2)$. Indeed, suppose that in $\sigma'$ there is a machine $i$ such that for all jobs $j$ scheduled on $i$, $\hp_j < t^*_i$. This implies that $C(\sigma_i)=\sum_{j \in \sigma_i} \op_j$.
However, if we look now at $\sigma'$ in $I^1$, we get $C^{I^1}_{\Gamma}(\sigma_i)=C^{I^2}(\sigma_i) + \hp(\Gamma(\sigma_i))$, which is greater than the claimed value.
To solve this problem we have to remember in $R||C_{\max}$ that there will be a space of size at most $\Gamma \cdot t_i$ which will be occupied by deviations.

Let us now turn to the correct version.
\begin{definition}\label{def:I2}
For any $t \in \T$, we define the following input $I^2(t)$ of Problem~\ref{prob:cap}.
We set the machine capacity to
\begin{equation*}
  c_i := 1 - \Gamma \cdot t_i + \epsilon .
\end{equation*}
The addition of $\epsilon$ is only a technicality to ensure that all $c_i$ are non-zero.
Note that if there are less than $\Gamma$ jobs on $i$, then
$t_i$ must be $0$ and therefore $c_i = 1 + \epsilon$.
For every job $j$ set
\begin{equation*}
  p_{ij} := \begin{cases}
    \op_j + \hp_j - t_i &\text{ if } \hp_j \ge t_i, \\
    \op_j &\text{ if } \hp_j < t_i. \\
  \end{cases}
\end{equation*}
\end{definition}
Note that at $\hp_j = t_i$, the values of both cases are equal. Notice also that in $I^2(t)$ there are only $|\Delta|$ different machine types.

\begin{lemma}\label{lemma:I1versI2}
If $\OPT(I^1) \le 1$ and $t$ is the outline of an optimal solution $\sigma^2$,
for any $i$, $C^{I^2(t)}(\sigma^2_i) \le c_i$.
\end{lemma}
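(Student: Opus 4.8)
The plan is to fix an arbitrary machine $i$, expand the load $C^{I^2(t)}(\sigma^2_i) = \sum_{j \in \sigma^2_i} p_{ij}$ via Definition~\ref{def:I2}, and compare the outcome against the hypothesis $C^{I^1}_{\Gamma}(\sigma^2_i) \le 1$. First I would partition the jobs on $i$ according to the threshold $t_i$, writing $A_i = \{j \in \sigma^2_i : \hp_j \ge t_i\}$ and $B_i = \{j \in \sigma^2_i : \hp_j < t_i\}$. Since a job of $A_i$ contributes $\op_j + \hp_j - t_i$ and a job of $B_i$ contributes $\op_j$, summing gives
\[
C^{I^2(t)}(\sigma^2_i) = \op(\sigma^2_i) + \sum_{j \in A_i} (\hp_j - t_i).
\]

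The heart of the argument is to rewrite the residual sum over $A_i$ in terms of $\hp_{\Gamma}(\sigma^2_i)$. Here I would invoke the definition of the outline $t = \o(\sigma^2)$: every job with $\hp_j > t_i$ lies in $\Gamma(\sigma^2_i)$ and every job with $\hp_j < t_i$ lies outside it. Consequently $\Gamma(\sigma^2_i) \subseteq A_i$, and any job of $A_i \setminus \Gamma(\sigma^2_i)$ must satisfy $\hp_j = t_i$ and hence contribute exactly $0$. Dropping these null contributions yields $\sum_{j \in A_i}(\hp_j - t_i) = \hp_{\Gamma}(\sigma^2_i) - |\Gamma(\sigma^2_i)| \cdot t_i$.

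It then only remains to check that $|\Gamma(\sigma^2_i)| \cdot t_i = \Gamma \cdot t_i$. When $i$ carries more than $\Gamma$ jobs this holds because $|\Gamma(\sigma^2_i)| = \Gamma$; otherwise the outline forces $t_i = 0$, so both sides vanish. Substituting, I obtain $C^{I^2(t)}(\sigma^2_i) = \op(\sigma^2_i) + \hp_{\Gamma}(\sigma^2_i) - \Gamma t_i = C^{I^1}_{\Gamma}(\sigma^2_i) - \Gamma t_i \le 1 - \Gamma t_i \le c_i$, the last inequality following from $c_i = 1 - \Gamma t_i + \epsilon$.

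The one point I expect to require care is the treatment of the jobs sitting exactly at the threshold $\hp_j = t_i$, whose deviation status in $\sigma^2$ is ambiguous. The capacity reduction of Definition~\ref{def:I2} is engineered precisely so that such jobs are neutral: because $\hp_j - t_i = 0$, it is immaterial whether they are counted among the $\Gamma$ deviating jobs or not. Making this explicit is exactly what repairs the naive (and incorrect) reduction sketched before Definition~\ref{def:I2}, so I would emphasize this step rather than bury it in a routine calculation.
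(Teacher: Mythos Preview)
Your proof is correct and follows essentially the same route as the paper's: both expand $\sum_{j\in\sigma^2_i} p_{ij}$ via the threshold-based definition of $p_{ij}$, identify the deviating set $\Gamma(\sigma^2_i)$ with the jobs having $\hp_j\ge t_i$ (up to the harmless ties at $\hp_j=t_i$), and conclude $C^{I^2(t)}(\sigma^2_i)=C^{I^1}_\Gamma(\sigma^2_i)-\Gamma t_i\le 1-\Gamma t_i< c_i$. The only cosmetic difference is that the paper treats $t_i=0$ as a separate case whereas you fold it into the identity $|\Gamma(\sigma^2_i)|\cdot t_i=\Gamma\cdot t_i$.
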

\begin{proof}
  Let us consider jobs $\sigma^2_i$ scheduled on machine $i$.
  If $t_i = 0$, then
  \begin{align*}
    \sum_{j\in \sigma^2_i} p_{ij} = \sum_{j\in \sigma^2_i} \op_j + \hp_j \le 1 < c_i .
  \end{align*}
  Assume now $t_i > 0$, implying that $|\Gamma(\sigma^2_i)| \ge \Gamma$.
    By choice of $t_i$, every job $j\in \Gamma(\sigma^2_i)$ has $\hp_j\ge t_i$
  and every $j\in \sigma^2_i\setminus \Gamma(\sigma^2_i)$ has $\hp_j\le t_i$.
  This implies
  \begin{align*}
    \sum_{j\in \sigma^2_i} p_{ij} = \sum_{j\in \Gamma(\sigma^2_i)} p_{ij} + \sum_{j\in \sigma^2_i\setminus \Gamma(\sigma^2_i)} p_{ij}
    = \sum_{j\in \Gamma(\sigma^2_i)} [\op_j + \hp_j - t_i] + \sum_{j\in \sigma^2_i\setminus \Gamma(\sigma^2_i)} \op_j
    \le 1 - \Gamma \cdot t_i < c_i .
  \end{align*}
\qed \end{proof}
\begin{lemma}\label{lemma:I2versI1}
  For any $t \in \T$, if there is a solution $\sigma^2$ of $I^2(t)$ such that $C^{I^2(t)}(\sigma^2_i) \le (1+\epsilon)\cdot c_i$ for any $i$,
  then $C^{I^1}_{\Gamma}(\sigma^2) \le (1 + \epsilon)^2$.
\end{lemma}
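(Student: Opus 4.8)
The plan is to bound the $I^1$-robust load machine by machine. Since $C^{I^1}_\Gamma(\sigma^2) = \max_i C^{I^1}_\Gamma(\sigma^2_i)$, it suffices to show $C^{I^1}_\Gamma(\sigma^2_i) \le (1+\epsilon)^2$ for an arbitrary fixed machine $i$. So I would fix $i$, write $X := \sigma^2_i$, and split $X$ into the \emph{high} jobs $H := \{j \in X : \hp_j \ge t_i\}$ and the \emph{low} jobs $L := \{j \in X : \hp_j < t_i\}$. Unfolding the two objectives on this machine gives $C^{I^2(t)}(X) = \op(X) + \hp(H) - |H|\cdot t_i$, since by Definition~\ref{def:I2} only the high jobs carry the shifted term $\hp_j - t_i$, while $C^{I^1}_\Gamma(X) = \op(X) + \hp_\Gamma(X)$. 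The shared nominal part $\op(X)$ cancels, so everything reduces to comparing $\hp_\Gamma(X)$ against $\hp(H) - |H|\, t_i$.

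The heart of the argument is the single inequality
\[
  \hp_\Gamma(X) \le \hp(H) + (\Gamma - |H|)\cdot t_i, \qquad (\star)
\]
which immediately yields $C^{I^1}_\Gamma(X) \le C^{I^2(t)}(X) + \Gamma\, t_i$ (the gap between the two right-hand sides is exactly $\Gamma\, t_i$). The cleanest way I would prove $(\star)$ is to shift every deviation by $t_i$: writing $\hp_j = (\hp_j - t_i) + t_i$ gives $\hp_\Gamma(X) = \sum_{j \in \Gamma(X)}(\hp_j - t_i) + |\Gamma(X)|\cdot t_i$. Here $|\Gamma(X)| \le \Gamma$, so the second term is at most $\Gamma\, t_i$; and in the first sum the contributions of the deviating low jobs $\Gamma(X)\cap L$ are negative while those of the non-deviating high jobs $H\setminus\Gamma(X)$ are non-negative, so $\sum_{j\in\Gamma(X)}(\hp_j - t_i) \le \sum_{j\in H}(\hp_j - t_i) = \hp(H) - |H|\, t_i$. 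Adding the two bounds proves $(\star)$. I prefer this formulation because it treats $|H|\ge\Gamma$ and $|H|<\Gamma$ uniformly, whereas a direct case split (top-$\Gamma$ jobs all inside $H$ versus $H$ exhausted and padded from $L$) would force one to bound the missing high deviations and the extra low deviations separately.

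With $(\star)$ in hand the conclusion is routine: feasibility of $\sigma^2$ gives $C^{I^2(t)}(X) \le (1+\epsilon)\,c_i = (1+\epsilon)(1 - \Gamma\, t_i + \epsilon)$, whence
\[
  C^{I^1}_\Gamma(X) \le (1+\epsilon)(1 - \Gamma\, t_i + \epsilon) + \Gamma\, t_i = (1+\epsilon)^2 - \epsilon\,\Gamma\, t_i \le (1+\epsilon)^2,
\]
using $t_i \ge 0$. Taking the maximum over $i$ finishes the proof. The step I expect to be the real obstacle is establishing $(\star)$ in full generality: in contrast to Lemma~\ref{lemma:I1versI2}, here $t$ is an arbitrary outline and $\sigma^2$ need \emph{not} respect it, so I cannot assume that exactly the high jobs deviate on machine $i$, nor that at least $\Gamma$ jobs are present. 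This is precisely the failure mode flagged before Definition~\ref{def:I2} (a machine loaded only with low-deviation jobs), and it is the reserved capacity $\Gamma\, t_i$ together with the threshold shift in $p_{ij}$ that absorbs it, regardless of how the jobs happen to be distributed.
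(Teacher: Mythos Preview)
Your argument is correct and is essentially the paper's own proof, just reorganized: you split the machine's jobs into $H$ and $L$ and prove the aggregate inequality $(\star)$, whereas the paper keeps the split along $\Gamma(\sigma^2_i)$ versus its complement and bounds each $p_{ij}$ individually (namely $p_{ij}\ge \op_j$ always and $p_{ij}\ge \op_j+\hp_j-t_i$ for deviating jobs). Both routes rely on the same shift-by-$t_i$ trick together with $|\Gamma(X)|\le\Gamma$, and both arrive at $C^{I^1}_\Gamma(\sigma^2_i)\le \Gamma t_i + \sum_{j\in\sigma^2_i}p_{ij}\le (1+\epsilon)^2$.
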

\begin{proof}
Let $i$ be a machine.
Then for every $j\in \sigma^2_i$,
\begin{equation*}
  p_{ij} = \begin{cases}
    \op_j + \hp_j - t_i \ge \op_j &\text{ if } \hp_j \ge t_i, \\
    \op_j &\text{ if } \hp_j < t_i.
  \end{cases}
\end{equation*}
Furthermore, for every $j\in \Gamma(\sigma^2_i)$,
\begin{equation*}
  p_{ij} = \begin{cases}
    \op_j + \hp_j - t_i &\text{ if } \hp_j \ge t_i, \\
    \op_j > \op_j + \hp_j - t_i &\text{ if } \hp_j < t_i.
  \end{cases}
\end{equation*}
This implies,
\begin{multline*}
  \sum_{j\in \Gamma(\sigma^2_i)} [\op_j + \hp_j] + \sum_{j\in \sigma^2_i\setminus \Gamma(\sigma^2_i)} \op_j
  \le \Gamma \cdot t_i + \sum_{j\in \Gamma(\sigma^2_i)} [\op_j + \hp_j - t_i] + \sum_{j\in \sigma^2_i\setminus \Gamma(\sigma^2_i)} \op_j \\
  \le \Gamma \cdot t_i + \sum_{j\in \Gamma(\sigma^2_i)} p_{ij} + \sum_{j\in \sigma^2_i\setminus \Gamma(\sigma^2_i)} p_{ij} = \Gamma \cdot t_i + \underbrace{\sum_{j\in \sigma^2_i} p_{ij}}_{\le (1 + \epsilon)\cdot c_i} \le \Gamma\cdot t_i + (1 + \epsilon)\cdot (1 - \Gamma\cdot t_i + \epsilon) \le (1 + \epsilon)^2 .
\end{multline*}
\qed \end{proof}

\begin{theorem}\label{thm:ptas}
There is a $(1+\epsilon)$-approximation algorithm for $P|\UG|C_{\max}$ running in time $O(m^{O(1/\epsilon\log(1/\epsilon))} \times f(|I|,\epsilon,O(1/\epsilon\log(1/\epsilon)))$ where $f$ is the function of Lemma~\ref{lemma:eptasrcmaxcap}.
\end{theorem}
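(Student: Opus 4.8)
The plan is to assemble the ingredients built up in Steps~1--4 into a single dual approximation and to wrap that inside a binary search over the threshold $T$. For a fixed $T$ I would first rescale the instance (Step~1) so that the question becomes whether $\OPT(I)\le 1$, and then design a subroutine that either outputs a schedule $\sigma$ with $C_{\Gamma}(\sigma)\le 1+O(\epsilon)$ or certifies $\OPT(I)>1$. Binary search on $T$ then converts this into an approximation guarantee for the original, unscaled problem; the $O(\log)$ search iterations contribute only a polynomial factor and can be folded into the $poly(|I|)$ term of the running time.

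For the subroutine itself I would pass from $I$ to the rounded instance $I^1$ of Step~2, so that by Observation~\ref{obs:IversI1} a target of $1$ for $I$ yields a target of $1$ for $I^1$, while any schedule of $I^1$ maps back to $I$ at the price of a multiplicative $(1+\epsilon)$ and an additive $\epsilon$ loss. I would then enumerate all candidate outlines $t\in\T$: by Lemma~\ref{lemma:guessoutline} this takes $m^{O(1/\epsilon\log(1/\epsilon))}$ time and is guaranteed to contain the outline $t^{*}=\o(\sigma^{1*})$ of any fixed optimal schedule $\sigma^{1*}$ of $I^1$. For each candidate $t$ I would construct the capacitated few-types instance $I^2(t)$ of Definition~\ref{def:I2} and run the algorithm of Lemma~\ref{lemma:eptasrcmaxcap} on it with parameter $\epsilon$, observing that $I^2(t)$ has only $|\Delta|=O(1/\epsilon\log(1/\epsilon))$ machine types, so the type parameter $k$ handed to that algorithm is $O(1/\epsilon\log(1/\epsilon))$.

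Correctness would follow by chaining the two directions already proved. If $\OPT(I)\le 1$, then $\OPT(I^1)\le 1$, and for the correct guess $t=t^{*}$ Lemma~\ref{lemma:I1versI2} shows the optimal schedule is feasible for $I^2(t^{*})$ with loads $\le c_i$; hence the algorithm of Lemma~\ref{lemma:eptasrcmaxcap} cannot reject on that guess and must return a schedule with $C(\sigma_i)\le(1+\epsilon)c_i$ for all $i$. Conversely, whenever that algorithm returns such a schedule for some $t$, Lemma~\ref{lemma:I2versI1} bounds its cost in $I^1$ by $(1+\epsilon)^2$, and Observation~\ref{obs:IversI1} then bounds its cost in $I$ by $(1+\epsilon)(1+\epsilon)^2+\epsilon=1+O(\epsilon)$. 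Thus if every guess is rejected we may soundly declare $\OPT(I)>1$, and otherwise we output a schedule of cost $1+O(\epsilon)$. Since the guarantee is $1+O(\epsilon)$ rather than $1+\epsilon$, I would finally run the whole procedure with $\epsilon$ replaced by $\epsilon/c$ for the appropriate constant $c$, as announced in the preprocessing remark.

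For the running time I would multiply the $m^{O(1/\epsilon\log(1/\epsilon))}$ guesses by the cost $f(|I|,\epsilon,O(1/\epsilon\log(1/\epsilon)))$ of a single call to Lemma~\ref{lemma:eptasrcmaxcap}, matching the stated bound. The only genuinely delicate point is the soundness of the rejection rule: one must be certain that the true optimal outline always appears among the enumerated candidates, so that rejection on every guess really does exclude $\OPT(I)\le 1$. This is precisely what Lemma~\ref{lemma:guessoutline} together with Lemma~\ref{lemma:I1versI2} guarantees, so once these are in hand the remaining work is purely the bookkeeping of the accumulated error terms.
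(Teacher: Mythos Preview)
Your proposal is correct and follows essentially the same approach as the paper: enumerate all outlines $t\in\T$, run the algorithm of Lemma~\ref{lemma:eptasrcmaxcap} on each $I^2(t)$, reject $T$ if all calls fail (justified by Observation~\ref{obs:IversI1} and Lemma~\ref{lemma:I1versI2}), and otherwise convert the returned schedule via Lemma~\ref{lemma:I2versI1} and Observation~\ref{obs:IversI1} to obtain cost $1+O(\epsilon)$. Your write-up is in fact somewhat more explicit than the paper's about the binary-search wrapper and the final rescaling of $\epsilon$, but the logical structure is identical.
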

\begin{proof}
  Given $I$ input of $P|\UG|C_{\max}$ and a threshold $T$, we run algorithm $A$ of Lemma~\ref{lemma:eptasrcmaxcap} on $I^2(t)$ for any $t \in \T$ with a precision $\epsilon$.
  If $A$ rejects all the $I^2(t)$ then we can reject $T$ according to Observation~\ref{obs:IversI1} and Lemma~\ref{lemma:I1versI2}.
  Otherwise, there exists $t_0$ such that $A(I^2(t_0))$ outputs a schedule $\sigma^2$ where $C^{I^2(t_0)}(\sigma^2) \le (1+\epsilon)\cdot c_i$ for any $i$,
  implying $C^I_\Gamma(\sigma^2) \le (1+\epsilon)C^{I^1}_\Gamma(\sigma^2)+\epsilon \le (1+\epsilon)^3+\epsilon \le 1+5\epsilon$ according to Observation~\ref{obs:IversI1} and Lemma~\ref{lemma:I2versI1} (for sufficiently small $\epsilon$).
Finally, the running time is as claimed due to the bound of $\T$ in Lemma~\ref{lemma:guessoutline}.
\qed \end{proof}

\section{EPTAS for identical machines}
\label{sec:EPTAS}

The approach for an EPTAS is similar to the PTAS above. We would like to remove the bottleneck from the previous
section, which is the guessing the thresholds.
In the PTAS we notice that even if the thresholds were chosen incorrectly, but we find a solution
to the derived problem, we can get a good solution for the initial problem.
Informally, we will now still create an instance of Problem~\ref{prob:cap}, but we only guess approximately the number of machines for each threshold.

We start by defining $I^1$ as in the previous section.
Given any solution $\sigma^1$ of $I^1$ such that $C^{I^1}_{\Gamma}(\sigma^1) \le 1$, we can associate to $\sigma$ a restricted outline $\m=\ro(\sigma)$ where $\m$ is defined as follows.
Let $t=\o(\sigma)$.
For any threshold value $l \in \Delta$, let $m_l= |\{i | t_i = l\}|$ be the number of machines with threshold $l$ in $\sigma^1$.
We define $\m_l \in \{0, 1, 2, 4, 8,\dotsc, 2^{\lfloor \log(m)\rfloor}\}$ such that $\m_l \le m_l < 2 \m_l$.
Let $\rT = \{\m\in\{0, 1, 2, 4, 8,\dotsc, 2^{\lfloor \log(m)\rfloor}\}^\Delta \mbox{ such that $\frac{m}{2} \le \sum_l \m_l\le m$} \}$ be the set of restricted outlines (of solutions of cost at most $1$).

\begin{lemma}\label{lemma:guessrestrictedoutline}
  Consider a solution $\sigma^{1*}$ of $I^1$ such that $C_\Gamma(\sigma^{1*}) \le 1$, and let $\m^* = \ro(\sigma^{1*})$.
  Then, we can guess in  time $2^{O(1/\epsilon\log^2(1/\epsilon))} + m^{O(1)}$ the vector $\m^*$.
\end{lemma}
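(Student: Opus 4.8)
The plan is to prove the bound by \emph{enumerating} a list of candidate restricted outlines that is guaranteed to contain $\m^*$, and bounding the length of this list. By definition of $\ro$, every coordinate $\m^*_l$ of $\m^* = \ro(\sigma^{1*})$ lies in $\{0,1,2,4,\dots,2^{\lfloor\log m\rfloor}\}$, a set of exactly $\lfloor\log m\rfloor + 2$ values, and there is one coordinate per threshold value $l\in\Delta$. Hence $\m^*$ belongs to the product domain $\{0,1,2,\dots,2^{\lfloor\log m\rfloor}\}^\Delta$ of cardinality at most $N := (\lfloor\log m\rfloor + 2)^{|\Delta|}$. Enumerating this domain, and discarding the vectors not in $\rT$ (a polynomial-time test per vector), produces a list containing $\m^*$, so it suffices to show $N \le 2^{O(1/\epsilon\log^2(1/\epsilon))} + m^{O(1)}$. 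Recall from Observation~\ref{obs:I1} and the paragraph preceding Lemma~\ref{lemma:guessoutline} that $|\Delta| = O(1/\epsilon\log(1/\epsilon))$.

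Taking logarithms, $\log N \le |\Delta|\cdot\log(\lfloor\log m\rfloor + 2)$. This quantity mixes an $\epsilon$-dependent factor with an $m$-dependent one and is in general neither $O(1/\epsilon\log^2(1/\epsilon))$ nor $O(\log m)$; I would resolve this by a case distinction at the threshold $\log m = (1/\epsilon)^2$. If $\log m \le (1/\epsilon)^2$, then $\log(\lfloor\log m\rfloor+2) = O(\log(1/\epsilon))$, so $\log N = O(1/\epsilon\log(1/\epsilon))\cdot O(\log(1/\epsilon)) = O(1/\epsilon\log^2(1/\epsilon))$ and $N \le 2^{O(1/\epsilon\log^2(1/\epsilon))}$. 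If instead $\log m > (1/\epsilon)^2$, then $1/\epsilon < \sqrt{\log m}$ and $\log(1/\epsilon) < \tfrac12\log\log m$, whence $|\Delta| = O(\sqrt{\log m}\,\log\log m)$ and
\[
\log N \le |\Delta|\cdot\log(\lfloor\log m\rfloor + 2) = O\bigl(\sqrt{\log m}\,(\log\log m)^2\bigr) = O(\log m),
\]
using $\sqrt{\log m}\,(\log\log m)^2 = O(\log m)$ for all $m\ge 2$; thus $N \le m^{O(1)}$. Combining both cases gives $N \le 2^{O(1/\epsilon\log^2(1/\epsilon))} + m^{O(1)}$, which is the claimed bound.

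I expect the only real obstacle to be the balancing in the second step: the cross-over point between the two regimes must be chosen correctly, since a weaker threshold such as $\log m \le 1/\epsilon$ would leave $\log N = \omega(\log m)$ in the large-$m$ regime and fail to yield $m^{O(1)}$. The argument also relies crucially on the fact that the definition of $\ro$ rounds each machine count $m_l$ to a power of two, which is exactly what reduces the number of admissible values per coordinate from $m$ down to $O(\log m)$; without this rounding the product domain would have size $m^{|\Delta|}$, far too large.
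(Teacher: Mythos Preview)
Your proposal is correct and follows essentially the same approach as the paper: enumerate the product domain of size $(\log m)^{|\Delta|}$ and bound it via a case distinction on the relative sizes of $m$ and $1/\epsilon$. The only difference is the cut-off you choose for the two regimes---you split at $\log m = (1/\epsilon)^2$, whereas the paper splits at $\log(m)/\log\log(m) = (1/\epsilon)\log(1/\epsilon)$---but both thresholds work, and yours is arguably the cleaner one.
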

\begin{proof}
Clearly it suffices to iterate over all values $\m_i\in \{0, 1, 2, 4, 8,\dotsc, 2^{\lfloor \log(m)\rfloor}\}$, i.e., $O(\log(m))$ many.
Guessing this number for every threshold value in $\Delta$ takes $\log^{O(1/\epsilon\log(1/\epsilon))}(m)$ time.
Consider first the case when $\log(m)/\log\log(m) \le 1/\epsilon\log(1/\epsilon)$.
For sufficiently large $m$ it holds that $\log^{1/2}(m) \le \log(m)/\log\log(m) \le 1/\epsilon\log(1/\epsilon)$. Hence,
\begin{equation*}
  \log^{O(1/\epsilon\log(1/\epsilon))}(m)
  = (\log^{1/2}(m))^{2 \cdot O(1/\epsilon\log(1/\epsilon))}
  \le (1/\epsilon\log(1/\epsilon))^{O(1/\epsilon\log(1/\epsilon))}
  \le 2^{O(1/\epsilon\log^2(1/\epsilon))} .
\end{equation*}
If on the other hand $\log(m)/\log\log(m) \ge 1/\epsilon\log(1/\epsilon)$, then
\begin{equation*}
  \log^{O(1/\epsilon\log(1/\epsilon))}(m)
  \le \log^{O(\log(m)/\log\log(m))}(m)
  = 2^{O(\log(m)/\log\log(m)\cdot\log\log(m))}
  = m^{O(1)} .
\end{equation*}
We conclude,
\begin{equation*}
  \log^{O(1/\epsilon\log(1/\epsilon))}(m)
  \le 2^{O(1/\epsilon\log^2(1/\epsilon))} + m^{O(1)} .
\end{equation*}
From all the guesses, we report fail whenever $\sum_i m_i < m/2$ or $\sum_i m_i > m$.
\qed \end{proof}

For any $\m \in \rT$, we define the following input $I^2(\m)$ of Problem~\ref{prob:cap}.
We first create for any $l$ a set $M_l$ of $\m_l$ machines where for each machine $i \in M_l$ the capacity and the $p_{ij}$ are defined as in Definition~\ref{def:I2} for threshold $t_i=l$.
Then, we create another set $M'_l$ of $\m_l$ machines (that we call cloned machines) with the same capacity and the same $p_{ij}$ values.
Let $m' = \sum \m_l$. Notice that the total number of machines is $2m'$, with $m \le 2m' < 2m$.
Thus, we have to ensure that not too many machines are used in total.
For that purpose we add a set of $2m' - m$ dummy jobs $D$, where all $j \in D$ have $p_{ij}=\infty$ on the
original machines $i \in M_l$ and $p_{ij}=c_i$ on every cloned machine $i \in M'_l$. 
Notice that the number of types is now $2|\Delta|$, which is still small enough to get an EPTAS.
Let us call the non-dummy jobs \emph{regular} jobs. 

\begin{lemma}\label{lemma:I1versI2EPTAS}
If $\OPT(I^1) \le 1$ and $\m$ is the restricted outline of an optimal solution, then there exists a solution $\sigma^2$ of $I^2(\m)$ such that for any $i$, $C^{I^2(\m)}(\sigma^2_i) \le c_i$.
\end{lemma}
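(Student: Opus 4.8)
The plan is to build a feasible schedule $\sigma^2$ of $I^2(\m)$ directly from an optimal schedule $\sigma^1$ of $I^1$ with $C^{I^1}_{\Gamma}(\sigma^1) \le 1$, whose restricted outline is $\m$. Write $t = \o(\sigma^1)$ for the full outline and, for each threshold value $l \in \Delta$, let $m_l = |\{i : t_i = l\}|$, so that by definition of $\ro$ we have $\m_l \le m_l < 2\m_l$ and $\sum_l m_l = m$. The key observation is that a machine of $I^2(\m)$ lying in $M_l \cup M'_l$ has threshold $l$, hence exactly the capacity $c_i = 1 - \Gamma\cdot l + \epsilon$ and the same processing times $p_{ij}$ as the single threshold-$l$ machine used in the PTAS instance $I^2(t)$ of Definition~\ref{def:I2}. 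Therefore the per-machine bound established in the proof of Lemma~\ref{lemma:I1versI2} applies verbatim: whenever we move all jobs that $\sigma^1$ places on an original machine of threshold $l$ onto a machine of $M_l \cup M'_l$, the resulting load is at most $c_i$.

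Concretely, for each $l$ I would map the $m_l$ threshold-$l$ machines of $\sigma^1$ injectively into $M_l \cup M'_l$, using the $\m_l$ original machines of $M_l$ first and then only $m_l - \m_l < \m_l$ of the $\m_l$ cloned machines of $M'_l$; this is possible exactly because $m_l < 2\m_l = |M_l \cup M'_l|$. All regular jobs are scheduled in this way, and by the remark above each such machine ends up with load at most its capacity.

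It then remains to deal with the $2m' - m$ dummy jobs of $D$, recalling that a dummy job has $p_{ij} = \infty$ on every original machine and $p_{ij} = c_i$ on every cloned machine, so it can only be placed, alone, on a still-empty cloned machine. A counting argument closes the proof: within $M'_l$ exactly $m_l - \m_l$ cloned machines were used by regular jobs, leaving $\m_l - (m_l - \m_l) = 2\m_l - m_l \ge 1$ free, so summing over $l$ gives $\sum_l (2\m_l - m_l) = 2\sum_l \m_l - \sum_l m_l = 2m' - m$ free cloned machines, precisely the number of dummy jobs. Placing one dummy job on each free cloned machine $i$ produces load exactly $c_i$, which is feasible, and since the original machines of $M_l$ never receive a dummy job, no constraint is violated there either. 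This yields the required $\sigma^2$ with $C^{I^2(\m)}(\sigma^2_i) \le c_i$ for every $i$.

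The main obstacle, and the reason the cloning/dummy construction is set up this way, is exactly this final bookkeeping: one must choose the injection so that original machines are saturated before cloned ones, so that the number of free cloned machines matches $|D| = 2m' - m$ exactly. Too few free clones and some dummy job has nowhere legal to go; too many and a cloned machine is left idle (harmless in itself), but it is the equality $\sum_l (2\m_l - m_l) = 2m' - m$ that makes the assignment of dummies to free clones a perfect matching and keeps every machine within capacity. Everything concerning the capacity feasibility of the regular part is inherited directly from Lemma~\ref{lemma:I1versI2} and needs no new computation.
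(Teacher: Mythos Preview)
Your argument is correct and essentially the same as the paper's: both proofs match the $m_l$ threshold-$l$ machines of the optimal schedule to machines in $M_l\cup M'_l$, invoke Lemma~\ref{lemma:I1versI2} for the capacity bound on the regular part, and use the identity $\sum_l(2\m_l-m_l)=2m'-m$ to fit the dummy jobs on the leftover cloned machines. The only cosmetic difference is the order (you place regular jobs first and dummies afterward, whereas the paper does the reverse); also note that your side remark ``$2\m_l-m_l\ge 1$'' need not hold when $\m_l=m_l=0$, but this has no bearing on the counting.
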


\begin{proof}
Let $m^*_l= |\{i | t_i = l\}|$ be the number of machines with threshold $l$ in the considered optimal solution of $I^1$.
Let $l \in \Delta$ be a threshold value. We first schedule  $2\m_l - m^*_l$ many dummy jobs on
cloned machines of $M'_l$. This will cover all dummy jobs, since
\begin{equation*}
  \sum_l[2 \m_l - m^*_l] = 
  2\sum_l \m_l - \sum_l m^*_l = 
  2m' - m .
\end{equation*}
We will now schedule all remaining jobs on the empty machines. For every
threshold value $l$ we have $2 \m_l - (2\m_l - m^*_l) = m^*_l$ many empty machines.
In other words, we are left with an instance with the exact same number of machines
for each threshold as in the optimal solution and with the original jobs.
As argued in Lemma~\ref{lemma:I1versI2}, we get the desired claim.
\qed \end{proof}

\begin{lemma}\label{lemma:I2versI1EPTAS}
  For any $\m \in \rT$, if there is a solution $\sigma^2$ of $I^2(\m)$ such that $C^{I^2(\m)}(\sigma_i^2) \le c_i+\epsilon$ for any $i$,
  then we can deduce a solution $\sigma^3$ for $I^1$ with $C^{I^1}_{\Gamma}(\sigma^3) \le (1 + 2\epsilon)^2$.
\end{lemma}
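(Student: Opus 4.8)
The plan is to reuse the single-machine estimate of Lemma~\ref{lemma:I2versI1}, now with the additive slack $c_i+\epsilon$ instead of the multiplicative slack $(1+\epsilon)c_i$, and then to collapse the $2m'$ machines of $I^2(\m)$ down to the $m$ machines of $I^1$ by discarding the dummy jobs and re-inserting the few regular jobs that shared a machine with a dummy. First I would record the per-machine bound. Fix a machine $i$ with threshold $t_i=l$, so $c_i=1-\Gamma l+\epsilon$, and assume $\sum_{j\in\sigma^2_i}p_{ij}\le c_i+\epsilon$. Running exactly the computation of Lemma~\ref{lemma:I2versI1} (bound the deviating part by replacing each $p_{ij}=\op_j+\hp_j-l$ by $\op_j+\hp_j$, paying $\Gamma l$ in total) gives
\[
  C^{I^1}_{\Gamma}(\sigma^2_i) \;\le\; \Gamma\cdot l + \sum_{j\in\sigma^2_i} p_{ij} \;\le\; \Gamma l + c_i + \epsilon \;=\; 1 + 2\epsilon .
\]
Thus every machine of $\sigma^2$, taken in isolation, already has robust cost at most $1+2\epsilon$; the only genuine issue is the machine count, since $I^2(\m)$ uses $2m'$ machines while $I^1$ has only $m$, with $m\le 2m'<2m$.

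Next I would remove the dummies. Because a dummy job has $p_{ij}=\infty$ on every original machine of $M_l$, all dummies sit on cloned machines, and because a dummy occupies $c_i$ of the budget $c_i+\epsilon$, a cloned machine carrying a (single) dummy retains regular jobs of total load $\sum p_{ij}\le\epsilon$. Call such a machine \emph{light} and every other machine \emph{heavy}. In the generic case where each cloned machine carries at most one dummy there are exactly $2m'-m$ light machines and hence exactly $m$ heavy machines. I would take the heavy machines as \emph{keepers} and inject each light machine into a distinct keeper of the \emph{same threshold}, merging its regular jobs in. This injection exists by a per-threshold count: for threshold $l$ the light machines are among the $\m_l$ cloned machines, so there are at most $\m_l$ of them, while the heavy machines of threshold $l$ number $2\m_l-(\text{light of type }l)\ge \m_l$. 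After a keeper of threshold $l$ absorbs one light machine its regular load is at most $c_l+2\epsilon$, so the displayed estimate gives robust cost at most $\Gamma l + c_l + 2\epsilon = 1+3\epsilon \le (1+2\epsilon)^2$, while keepers that absorb nothing stay at $1+2\epsilon$. This produces $\sigma^3$ on exactly $m$ machines.

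The point that forces the merging to respect thresholds — and which I expect to be the main obstacle — is that a light machine can conceal a large robust contribution behind a tiny $p$-load: a threshold-$l$ machine holding $\Gamma$ jobs with $\op_j=0$ and $\hp_j=l$ has $p_{ij}=0$ for all of them yet contributes $\Gamma l$ (up to $1$) to $C^{I^1}_\Gamma$. Moving those jobs onto a keeper of a \emph{smaller} threshold would inflate their $p$-values and break the estimate, so the merge must stay within one threshold, where $p$-loads simply add. The second subtlety is the degenerate case in which a cloned machine carries two or more dummies; this can happen only when $c_i\le\epsilon$, i.e.\ $c_i=\epsilon$ and $t_i=1/\Gamma$, and then the machine holds no regular job, so the clean count "light $=2m'-m$, heavy $=m$" can fail. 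I would dispose of this by the observation that \emph{every} threshold-$1/\Gamma$ machine has capacity $\epsilon$ and therefore regular load at most $2\epsilon$; hence any surplus heavy machines created by such clustering are themselves light and can be folded into other threshold-$1/\Gamma$ keepers at the price of at most one further additive $\epsilon$, keeping the cost within the $(1+2\epsilon)^2$ budget. Everything else reduces to the per-machine estimate together with this counting.
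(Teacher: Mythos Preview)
Your proposal is correct and takes essentially the same route as the paper: strip the small regular load off each dummy-bearing cloned machine, merge it into a distinct same-threshold machine (the paper sends it to the original machines in $M_l$, you to any heavy machine of threshold $l$; both counts give $\le \m_l$ sources and $\ge \m_l$ targets), and then invoke the per-machine estimate of Lemma~\ref{lemma:I2versI1}. The only divergence is the two-dummy degenerate case, which the paper sidesteps by working (somewhat inconsistently with the stated additive hypothesis $c_i+\epsilon$) under the multiplicative bound $(1+\epsilon)c_i$, for which two dummies on one machine is impossible when $\epsilon<1$; your additive treatment of that case is a touch loose in the final merging count, but the observation that all threshold-$1/\Gamma$ machines carry regular load at most $2\epsilon$ is enough to make it go through.
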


\begin{proof}
We will first normalize $\sigma^2$. Since dummy jobs have $p_{ij}=c_i$ on cloned machines,
in a $(1+\epsilon)$-approximation there can only be one per machine (assuming that $\epsilon < 1$).
Indeed, there may still be a load of $\epsilon \cdot c_i$ from other jobs on
the same machine. We want to ensure that every machine either has a dummy job or some regular load,
but not both. For every threshold value $l \in \Delta$, there can be at most $\m_l$ machines in $M'_l$ that
have a dummy job. For any such machine in $M'_l$, we remove all the regular jobs (of total load of at most $\epsilon \cdot c_i$) from it and move them
to one of the original machines in $M_l$, without using the same machine in $M_l$ twice.
Since for any $i \in M_l$ we had $C^{I^2(\m)}(\sigma^2_i) \le (1 + \epsilon) c_i$ before moving the jobs,
and since regular jobs have the same processing time on machines $M_l$ and $M'_l$,
after moving the jobs we get $C^{I^2(\m)}(\sigma^2_i) \le (1+2\epsilon) c_i$ for any $i \in M_l$.
We now have a solution violating the capacities
by at most $2\epsilon \cdot c_i$ such that a machine with a dummy job has no other jobs.

We now forget about all dummy jobs and the machines they are on.
What we are left with is a set of $m$ machines (with some thresholds $t$) such that for any $i$ we have $C^{I^2(\m)}(\sigma^2_i) \le (1+2\epsilon) c_i$.
By Lemma~\ref{lemma:I2versI1} we get the desired result.
\qed \end{proof}

All in all, we were able to reduce the number of instances created to 
only $2^{O(1/\epsilon\log^2(1/\epsilon))} + m^{O(1)}$ many and removed the bottleneck from the PTAS this way.
As in Theorem~\ref{thm:ptas}, given an instance of $P|\UG|C_{\max}$ we will use the algorithm of Lemma~\ref{lemma:eptasrcmaxcap} on $I^2(m)$ for any $m \in \rT$. This leads to the following result.

\begin{theorem}\label{thm:eptas}
  There is a $(1+\epsilon)$-approximation algorithm for $P|\UG|C_{\max}$ running in time $O(2^{O(1/\epsilon\log^2(1/\epsilon))} + m^{O(1)}) \times f(|I|,\epsilon,O(1/\epsilon\log(1/\epsilon)))$ where $f$ is the function of Lemma~\ref{lemma:eptasrcmaxcap}.
\end{theorem}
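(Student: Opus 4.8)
The plan is to mirror the proof of Theorem~\ref{thm:ptas}, replacing the expensive enumeration of the full outline by the cheap enumeration of the restricted outline, and then to glue together the lemmas of this section. As in the PTAS, I would work within the dual-approximation framing set up at the start of Section~\ref{sec:PTAS}: given the instance $I$ of $P|\UG|C_{\max}$ and a threshold $T$, scale $p_j := p_j/T$ so that the task becomes to either output a schedule of cost $1+O(\epsilon)$ or certify $\OPT(I) > 1$. A binary search on $T$ then upgrades this dual approximation into the claimed $(1+\epsilon)$-approximation, and the constant hidden in $O(\epsilon)$ is absorbed by rescaling $\epsilon$ in the preprocessing.

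First I would build $I^1$ exactly as in Section~\ref{sec:EPTAS}, so that by Observation~\ref{obs:IversI1} it suffices to either find a good solution of $I^1$ or prove $\OPT(I^1) > 1$. Then, rather than ranging over all of $\T = \Delta^m$, I would range over all restricted outlines $\m \in \rT$: by Lemma~\ref{lemma:guessrestrictedoutline} there are only $2^{O(1/\epsilon\log^2(1/\epsilon))} + m^{O(1)}$ of them, and they can be listed within that bound. For each $\m \in \rT$ I construct in polynomial time the capacitated instance $I^2(\m)$ (with its original machines, cloned machines, and dummy jobs) and run the approximate decision algorithm $A$ of Lemma~\ref{lemma:eptasrcmaxcap} with precision $\epsilon$; since $I^2(\m)$ has only $2|\Delta| = O(1/\epsilon\log(1/\epsilon))$ machine types, each call costs $f(|I|,\epsilon,O(1/\epsilon\log(1/\epsilon)))$.

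The correctness then splits into two cases. If $A$ rejects every $I^2(\m)$, I reject $T$: were $\OPT(I)\le 1$, Observation~\ref{obs:IversI1} would give $\OPT(I^1)\le 1$, and taking $\m^* = \ro(\sigma^*)$ for an optimal $\sigma^*$ of $I^1$, Lemma~\ref{lemma:I1versI2EPTAS} would make $I^2(\m^*)$ feasible with all loads at most $c_i$, so $A$ could not have rejected it — a contradiction. Otherwise some call $A(I^2(\m_0))$ returns a schedule $\sigma^2$ with $C^{I^2(\m_0)}(\sigma^2_i) \le (1+\epsilon)c_i$ for all $i$. Using $c_i \le 1+\epsilon$ this gives $C^{I^2(\m_0)}(\sigma^2_i) \le c_i+\epsilon$ after rescaling $\epsilon$ by a constant, so the hypothesis of Lemma~\ref{lemma:I2versI1EPTAS} holds and it yields a schedule $\sigma^3$ of $I^1$ with $C^{I^1}_{\Gamma}(\sigma^3) \le (1+2\epsilon)^2$. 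Observation~\ref{obs:IversI1} then transfers this back to $I$ as $C^{I}_{\Gamma}(\sigma^3) \le (1+\epsilon)(1+2\epsilon)^2 + \epsilon = 1 + O(\epsilon)$. Multiplying the number of guesses from Lemma~\ref{lemma:guessrestrictedoutline} by the per-call cost $f$ gives the stated running time.

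I expect the delicate points to be bookkeeping rather than structural, since all the combinatorial work is already carried by this section's lemmas. In particular I would be careful to match the guarantee $(1+\epsilon)c_i$ produced by Lemma~\ref{lemma:eptasrcmaxcap} to the hypothesis $c_i + \epsilon$ required by Lemma~\ref{lemma:I2versI1EPTAS} (via $c_i \le 1+\epsilon$ and a constant rescaling of $\epsilon$), and to collapse the final chain of $(1+O(\epsilon))$ factors into a single $1+O(\epsilon)$ for sufficiently small $\epsilon$. The conceptual heart — that guessing only the approximate machine counts per threshold still suffices — is precisely what Lemma~\ref{lemma:I2versI1EPTAS} and its normalization of the dummy and cloned machines already establish, so in the theorem it need only be invoked.
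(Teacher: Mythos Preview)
Your proposal is correct and follows essentially the same approach as the paper: the paper's proof is just the one-line remark that one proceeds as in Theorem~\ref{thm:ptas} but enumerates restricted outlines $\m\in\rT$ instead of full outlines, invoking Lemmas~\ref{lemma:guessrestrictedoutline}, \ref{lemma:I1versI2EPTAS}, and \ref{lemma:I2versI1EPTAS}. You have in fact been more careful than the paper in reconciling the $(1+\epsilon)c_i$ guarantee of Lemma~\ref{lemma:eptasrcmaxcap} with the $c_i+\epsilon$ hypothesis of Lemma~\ref{lemma:I2versI1EPTAS}; the paper glosses over this (and its proof of Lemma~\ref{lemma:I2versI1EPTAS} already silently uses $(1+\epsilon)c_i$ internally).
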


 \bibliographystyle{abbrv}
 \bibliography{ref}

\end{document}